%
\documentclass[runningheads]{llncs}
\usepackage[T1]{fontenc}
%
\usepackage{graphicx}
%
\usepackage{color}
%



\usepackage{amsmath,amssymb,amsfonts}
\usepackage{mathtools}
\usepackage{stmaryrd}
\usepackage{hyperref}
\usepackage{xspace}

\usepackage{algorithm}
\usepackage[noend]{algpseudocode}
\usepackage{changepage}
\usepackage{paralist}
\usepackage{thm-restate}
\usepackage{multirow}
\usepackage[table]{xcolor}
\usepackage{cleveref}
\usepackage{csquotes}

\usepackage[
colorinlistoftodos, 
textwidth=\marginparwidth, 
textsize=scriptsize, 
]{todonotes}

\usepackage{mdframed}

\usepackage{pgf}
\usepackage{tikz}
\usetikzlibrary{shapes,shapes.geometric,fit,trees,decorations,arrows,automata,shadows,positioning,plotmarks,backgrounds,tikzmark}
\usetikzlibrary{calc,matrix,fit,petri,decorations.markings,decorations.pathmorphing,patterns,intersections,decorations.text}




\newboolean{changeLTLnotation}
\setboolean{changeLTLnotation}{false}
\newboolean{changePowerNotation}
\setboolean{changePowerNotation}{false}




\newcommand{\p}[1]{\ensuremath{\text{Player}~#1}}

\newcommand{\toolname}{\textsc{PeSTel}\xspace}

\newcommand{\genziel}{\textsc{GenZiel}\xspace}

\newcommand{\N}{\mathbb{N}}

\newcommand{\bigO}{\mathcal{O}}
\newcommand{\cupdot}{\mathbin{\mathaccent\cdot\cup}}

\ifthenelse{\boolean{changePowerNotation}}{
	P
}{
}

\newcommand{\set}[1]{\left\lbrace #1\right\rbrace}
\newcommand{\tup}[1]{\left( #1\right)}

\newcommand{\inodd}{\in_{\mathrm{odd}}}
\newcommand{\ineven}{\in_{\mathrm{even}}}

\newcommand{\gamegraph}{\ensuremath{G}}

\newcommand{\priority}{\ensuremath{\mathbb{P}}}

\newcommand{\paritygame}{\mathit{Parity}}

\newcommand{\assump}{\ensuremath{\Psi}}
\newcommand{\assumpsafe}{\ensuremath{\assump_{\textsc{unsafe}}}}

\newcommand{\assumpgrlive}{\ensuremath{\assump_{\textsc{live}}}}

\newcommand{\assumpdep}{\ensuremath{\assump_{\textsc{colive}}}}

\newcommand{\livegroup}{\mathcal{H}}
\newcommand{\livegroupSingle}{H_i}
\newcommand{\livegroupSingleN}{H}
\newcommand{\colivegroup}{D}
\newcommand{\safegroup}{S}

\newcommand{\faulty}{F}

\newcommand{\safeTemp}{\textsc{SafetyTemplate}}
\newcommand{\reachTemp}{\textsc{ReachTemplate}}
\newcommand{\buchiTemp}{\textsc{B\"uchiTemplate}}
\newcommand{\coBuchiTemp}{\textsc{coB\"uchiTemplate}}
\newcommand{\parityTemp}{\textsc{ParityTemplate}}
\newcommand{\extract}{\textsc{ExtractStrategy}}
\newcommand{\checkconflictfree}{\textsc{CheckTemplate}}
\newcommand{\PUFcorrection}{\textsc{FaultCorrection}}


\newcommand{\solveBuchi}{\textsc{B\"uchi}}
\newcommand{\solveCobuchi}{\textsc{CoB\"uchi}}

\newcommand{\solveSafety}{\textsc{Safety}}

\newcommand{\edges}{\textsc{Edges}}


\newcommand{\pre}[2]{\textsf{pre}\ensuremath{_{#1}(#2)}}
\newcommand{\upre}[2]{\textsf{upre}\ensuremath{_{#1}(#2)}}
\newcommand{\cpre}[3]{\textsf{cpre}\ensuremath{^{#3}_{#1}(#2)}} 
\newcommand{\cprea}[2]{\textsf{cpre}\ensuremath{^a_{#1}(#2)}}

\newcommand{\cprez}[2]{\textsf{cpre}\ensuremath{^0_{#1}(#2)}}

\newcommand{\uattr}[2]{\textsf{uattr}\ensuremath{_{#1}(#2)}}
\newcommand{\attra}[2]{\textsf{attr}\ensuremath{^a_{#1}(#2)}}
\newcommand{\attro}[2]{\textsf{attr}\ensuremath{^1_{#1}(#2)}}
\newcommand{\attrz}[2]{\textsf{attr}\ensuremath{^0_{#1}(#2)}}

\newcommand{\stratFollows}{\Vdash}


\ifthenelse{\boolean{changeLTLnotation}}{
	\renewcommand{\bigcirc}{\mathbf{X}}
	\renewcommand{\lozenge}{\mathbf{E}}
	\renewcommand{\square}{\mathbf{G}}
}{
}


\newcommand{\game}{\mathcal{G}}

\newcommand{\win}{\mathcal{W}}
\newcommand{\wino}{\win_0}
\newcommand{\winl}{\win_1}
\newcommand{\wini}{\win_i}
\newcommand{\lang}{\mathcal{L}}


\newcommand{\pz}{\p{0}}
\newcommand{\po}{\p{1}}

\newcommand{\vertex}{V}
\newcommand{\vertexz}{\vertex^0}
\newcommand{\vertexo}{\vertex^1}
\newcommand{\vertexi}{\vertex^i}

\newcommand{\spec}{\Phi}
\newcommand{\edge}{E}
\newcommand{\strat}{\pi}
\newcommand{\stratz}{\pi^0}
\newcommand{\strato}{\pi^1}
\newcommand{\strati}{\pi^i}

\newcommand{\play}{\rho}

\makeatletter
\newsavebox{\@brx}
\newcommand{\llangle}[1][]{\savebox{\@brx}{\(\m@th{#1\langle}\)}%
	\mathopen{\copy\@brx\kern-0.5\wd\@brx\usebox{\@brx}}}
\newcommand{\rrangle}[1][]{\savebox{\@brx}{\(\m@th{#1\rangle}\)}%
	\mathclose{\copy\@brx\kern-0.5\wd\@brx\usebox{\@brx}}}

\makeatother

\newcommand{\src}{\mathit{src}}

%






\newcommand{\buchi}{\ifmmode B\ddot{u}chi \else B\"uchi \fi}
\newcommand{\cobuchi}{\ifmmode co\text{-}B\ddot{u}chi \else co-B\"uchi \fi}

\makeatletter 
\newif\ifFIRST
\newif\ifSECOND
\let\LISTOP\relax
\newcommand{\List}[4][\;]{#3#1%
	\FIRSTtrue
	\@for\i:=#2\do{%
		\ifFIRST\LISTOP{\i}\FIRSTfalse\else,\LISTOP{\i}\fi%
	}%
	#1#4%
	\let\LISTOP\relax
}
\makeatother

\makeatletter

\newcommand{\Set}[2][]{\List[#1]{#2}{\{}{\}}}
\newcommand{\VSet}[2][]{\let\LISTOP\val\List[#1]{#2}{\{}{\}}}

\newcommand{\VTuple}[2][]{\let\LISTOP\val\List[#1]{#2}{(}{)}}


\renewenvironment{proof}{\noindent{\color{darkgray}\textbf{Proof.}\ignorespaces }}{\hfill $\blacksquare$ \vspace*{1em}\newline} 
 
\newenvironment{claimproof}{\begin{adjustwidth}{2em}{}
		{\color{darkgray}\emph{Proof.}\ignorespaces }
	}{\hfill $\vartriangleleft$ \vspace*{0.5em}\end{adjustwidth}}


\newcommand{\abs}[1]{\left\lvert #1 \right\rvert}

\newcommand{\compose}{\textsc{ComposeTemplate}}

\newcommand{\conflict}{\mathcal{C}}

\usepackage{expl3}[2012-07-08]

\colorlet{darkgreen}{green!80!black}
\colorlet{darkred}{red!80!black}
\usetikzlibrary{arrows, automata, shapes}
\tikzset{auto, >= stealth}
\tikzset{every edge/.append style={thick, shorten >= 1pt}}
\tikzset{initial/.style={draw, thick, <-, shorten <=1pt}}
\tikzset{player0/.style = {draw, thick, shape=circle, minimum size=5mm}}
\tikzset{player1/.style = {draw, thick, shape=rectangle, minimum size=5mm}}
\tikzset{bplayer0/.style = {draw, thick, shape=ellipse, minimum size=5mm,text width=1.1cm}}
\tikzset{bplayer1/.style = {draw, thick, shape=rectangle, minimum size=5mm,text width=1.6cm}}
\newcommand\pos{1.4}


\begin{document}
\title{Synthesizing Permissive Winning Strategy Templates for Parity Games}
%
%

\author{Ashwani Anand\inst{1} \and Satya Prakash Nayak\inst{1} \and \\ Anne-Kathrin Schmuck\inst{1}}

\authorrunning{A. Anand, S. P. Nayak, and A. Schmuck}
%
\institute{Max Planck Institute for Software Systems, Kaiserslautern, Germany\\
	\email{\{ashwani,sanayak,akschmuck\}@mpi-sws.org}\\
}
\maketitle              
\begin{abstract}
We present a novel method to compute \emph{permissive winning strategies} in two-player games over finite graphs with $ \omega $-regular winning conditions. Given a game graph $G$ and a parity winning condition $\spec$, we compute a 
\emph{winning strategy template} $\assump$ that collects an infinite number of winning strategies for objective $\spec$ in a concise data structure. We use this new representation of sets of winning strategies to tackle two problems arising from applications of two-player games in the context of cyber-physical system design -- (i) \emph{incremental synthesis}, i.e., adapting strategies to newly arriving, \emph{additional} $\omega$-regular objectives $\spec'$, and (ii) \emph{fault-tolerant control}, i.e., adapting strategies to the occasional or persistent unavailability of actuators. 
The main features of our strategy templates -- which we utilize for solving these challenges -- are their easy computability, adaptability, and compositionality. For \emph{incremental synthesis}, we empirically show on a large set of benchmarks that our technique vastly outperforms existing approaches if the number of added specifications increases. 
While our method is not complete, 
our prototype implementation returns the full winning region in all 1400 benchmark instances, i.e.\ handling a large problem class efficiently in practice.

\end{abstract}
%
%
%
\section{Introduction}
Two-player $\omega$-regular games on finite graphs are an established modeling and solution formalism for many challenging problems in the context of correct-by-construction cyber-physical system (CPS) design \cite{tabuada2009verification,alur2015principles,belta2017formal}. Here, control software actuating a technical system \enquote{plays} against the physical environment. The winning strategy of the system player in this two-player game results in software which ensures that the controlled technical system fulfills a given temporal specification for any (possible) event or input sequence generated by the environment. 
Examples include warehouse robot coordination \cite{ScherKressGazit_2020}, reconfigurable manufacturing systems \cite{lesi2016towards}, and adaptive cruise control \cite{nilsson2015correct}.
In these applications, the technical system under control, as well as its requirements, are developing and changing during the design process. 
It is therefore desirable to allow for maintainable and adaptable control software. This, in turn, requires solution algorithms for two-player $\omega$-regular games which allow for this adaptability.

This paper addresses this challenge by providing a new algorithm to efficiently compute \emph{permissive winning strategy templates} in parity games which enable rich \emph{strategy adaptations}. Given a game graph $G=(V,E)$ and an objective $\spec$ a winning strategy template $\assump$ characterizes the winning region $\mathcal{W}\subseteq V$ along with three types of local edge conditions -- a \emph{safety}, a \emph{co-live}, and a \emph{live-group} template. The conjunction of these basic templates allows us to capture infinitely many winning strategies over $G$ w.r.t.\ $\spec$ in a simple data structure that is both (i) easy to obtain during synthesis, and (ii) easy to adapt and compose.

We showcase the usefulness of \emph{permissive winning strategy templates} in the context of CPS design by two application scenarios: (i) \emph{incremental synthesis}, where strategies need to be adapted to newly arriving \emph{additional} $\omega$-regular objectives $\spec'$, and (ii) \emph{fault-tolerant control}, where strategies need to be adapted to the occasional or persistent unavailability of actuators, i.e., system player edges. 

We have implemented our algorithms in a prototype tool \toolname and run it on more than $1400$ benchmarks adapted from the SYNTCOMP benchmark suite \cite{benchmark:syntcomp}. These experiments show that our class of templates effectively avoids re-computations for the required strategy adaptations. For \emph{incremental synthesis}, our experimental results are previewed in \cref{fig:intro-experiement-sensitivity}, where we compare \toolname against the state-of-the-art solver \genziel~\cite{chatterjee2007:generalizedparitygames} for generalized parity objectives, i.e., finite conjunction of parity objectives. We see that \toolname is as efficient as \genziel whenever all conjuncts of the objective are given \emph{up-front} (\cref{fig:intro-experiement-sensitivity} (left)) - even outperforming it in more than 90\% of the instances. 
Whenever conjuncts of the objective arrive \emph{one at a time}, \toolname outperforms the existing approaches significantly if the number of objectives increases (\cref{fig:intro-experiement-sensitivity} (right)). 
This shows the potential of \toolname towards more adaptable and maintainable control software for CPS.

\begin{figure}[t]
    \includegraphics[scale=0.29]{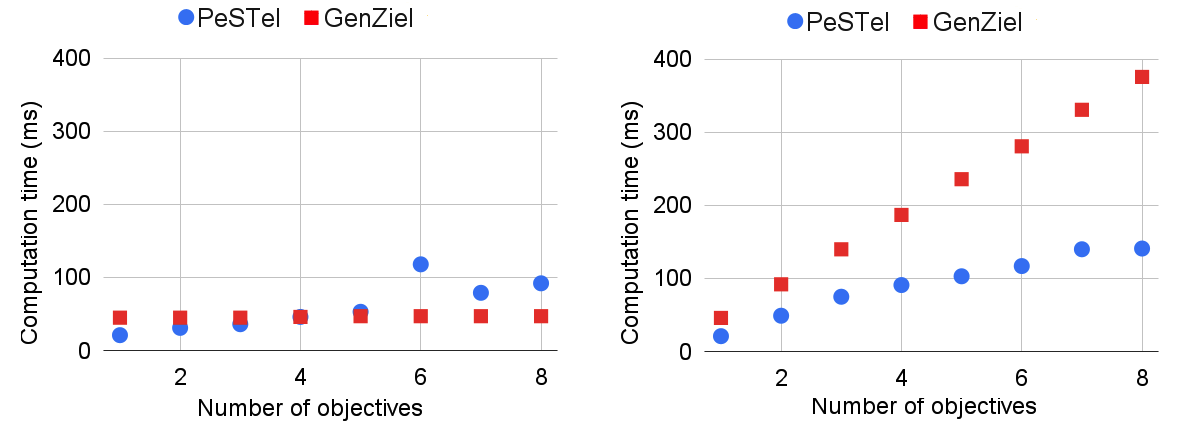}
    \vspace{-0.2cm}
     \caption{Experimental results over 1400 generalized parity games comparing the performance of our tool \toolname against the state-of-the-art generalized parity solver \genziel \cite{chatterjee2007:generalizedparitygames}. Data points give the average execution time (in ms) over all instances with the same number of parity objectives. Left: all objectives are given \emph{upfront}. Right: objectives are added \emph{one-by-one}. See \cref{sec:experiments} for more details on those experiments.}
     \label{fig:intro-experiement-sensitivity}
     \vspace{-0.5cm}
 \end{figure}

\smallskip
\noindent\textbf{Illustrative Example.}
To appreciate the simplicity and easy adaptability of our strategy templates, consider the game graph in \cref{fig:introExample} (left). The first winning condition $\spec_1$ requires vertex $f$ to never be seen along a play. This can be enforced by $\p{0}$ from vertices $\mathcal{W}_0=\{a,b,c,d\}$ called the \emph{winning region}. The safety template $\assump_1$ ensures that the game always stays in $\mathcal{W}_0$ by forcing the edge $e_{de}$ to never be taken. It is easy to see that every $\p{0}$ strategy that follows this rule results in plays which are winning if they start in $\mathcal{W}_0$. Now consider the second winning condition $\spec_2$ which 
requires vertex $c$ or $d$ to be seen infinitely often. This induces the live-group template $\assump_2$ which requires that whenever vertex $a$ is seen infinitely often, either edge $e_{ac}$ or edge $e_{ad}$ needs to be taken infinitely often. It is easy to see that any strategy that complies with this edge-condition is winning for $\p{0}$ from every vertex and there are infinitely many such compliant winning strategies. Finally, we consider condition $\spec_3$ requiring vertex $b$ to be seen only finitely often. This induces the strategy template $\assump_3$ which is a co-liveness template requiring that all edges from $\p{0}$ vertices which unavoidably lead to $b$ (i.e., $e_{ab}$, $e_{bd}$, and $e_{de}$) are taken only finitely often. We can now combine all templates into a new template $\assump'=\assump_1\wedge\assump_2\wedge\assump_3$ and observe that all strategies compliant with $\assump'$ are winning for $\spec'=\spec_1\wedge\spec_2\wedge\spec_3$.

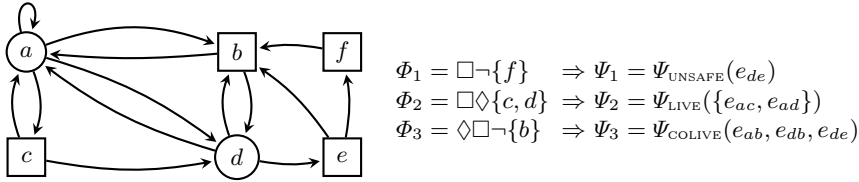
\begin{figure}[t]
\vspace{-0.5cm}
\centering
 \begin{tikzpicture}
	  		\node[player0] (0) at (0, 0) {$a$};
	  		\node[player1] (1) at (2*\fpeval{\pos}, 0) {$b$};
			\node[player1] (2) at (0,-\fpeval{\pos}) {$c$};
			\node[player0] (3) at (2*\fpeval{\pos}, -\fpeval{\pos}) {$d$};
			\node[player1] (4) at (3*\fpeval{\pos}, -\fpeval{\pos}) {$e$};
			\node[player1] (5) at (3*\fpeval{\pos}, 0) {$f$};
			\path[->] (0) edge[loop above] () edge[bend left = 20] (1) edge[bend left = 20] (2) edge[bend left = 10] (3);
			\path[->] (1) edge[bend left =5] (0) edge[bend left = 20] (3);
			\path[->] (2) edge[bend left = 20] (0) edge[bend right = 10] (3);
			\path[->] (3) edge[bend left = 10] (0) edge[bend left = 20] (1) edge[bend right = 10] (4);
			\path[->] (4) edge[bend right = 10] (1) edge[bend right = 10] (5);
			\path[->] (5) edge[bend right = 10] (1);

\node[text width=8cm,anchor=north west] at (4.8,0) {
$\spec_1=\Box\neg\{f\}$ \\
$\spec_2=\Box\Diamond\{c,d\}$ \\
$\spec_3=\Diamond\Box\neg\{b\}$

}; 
\node[text width=8cm,anchor=north west] at (7,0) {
$\Rightarrow$ $\assump_1=\assumpsafe(e_{de})$\\
$\Rightarrow$ $\assump_2=\assumpgrlive(\{e_{ac},e_{ad}\})$\\
$\Rightarrow$ $\assump_3=\assumpdep(e_{ab},e_{db},e_{de})$
};
 \end{tikzpicture}
\vspace{-0.6cm}
 \caption{A two-player game graph with $\po$ (squares) and $\pz$ (circles) vertices, different winning conditions $\spec_i$, and corresponding winning strategy templates $\assump_i$.}\label{fig:introExample}
\vspace{-0.5cm}
\end{figure}

In addition to their compositionality, strategy templates also allow for local strategy adaptations in case of edge unavailability faults. Consider again the game in \cref{fig:introExample} with the objective $ \spec_2 $. 
Suppose that $ \pz $ follows the strategy $ \strat $: $ a\mapsto d $ and $ d\mapsto a $, which is compliant with $ \assump_2 $. If the edge $ e_{ad} $ becomes unavailable, we would need to re-solve the game for the modified game graph $\gamegraph'=(V,E\setminus\{e_{ad}\})$. However, given the strategy template $ \assump_2 $ we see that the strategy  $ \strat' $:  $ a\mapsto c $ and $ d\mapsto a $ is actually compliant with $\assump_2$ over $\gamegraph'$. This allows us to obtain a new strategy without re-solving the game. 

While these examples demonstrate the potential of templates for strategy adaptation, there exist scenarios where conflicts between templates or graph modifications arise, 
which require re-computations. Our empirical results, however, show that such conflicts rarely appear in practical benchmarks. This suggests that our technique can handle a large problem class efficiently in practice. 


\smallskip
\noindent\textbf{Related work.}
The class of templates we use was introduced in \cite{permissiveassumptions} and utilized to represent environment assumptions that enable a system to fulfill its specifications in a cooperative setting. Contrary to  \cite{permissiveassumptions}, this paper uses the same class of templates to represent the system's winning strategies in a zero-sum setting.

While the computation of \emph{permissive strategies} for the control of CPS is an established concept in the field of supervisory control \footnote{See \cite{EhlersJdeds,SchmuckJdeds20,MajumdarSchmuck_23_TAC} for connections between supervisory control and reactive synthesis.} \cite{CassandrasLafortune3rd,wonhamcai_book}, it has also been addressed in reactive synthesis where the considered specification class is typically more expressive, 
e.g., Bernet et al. \cite{bernet2002:permissiveStrategies} introduce permissive strategies that encompass all the behaviors of positional strategies and Neider et al. \cite{neider2014:mullergamestosafety} introduce permissiveness to subsume strategies that visit losing loops at most twice. Finally, Bouyer et al. \cite{bouyer2011:measuringpermissiveness} take a quantitative approach to measure the permissiveness of strategies, by minimizing the penalty of not being permissive. However, all these approaches are not optimized towards strategy adaptation and thereby typically fail to preserve enough behaviors to be able to effectively satisfy subsequent objectives. 
A notable exception is a work by Baier et al. \cite{Klein2015:mostGeneralController}. 
While their strategy templates are more complicated and more costly to compute than ours, they are \emph{maximally} permissive (i.e., capture \emph{all} winning strategies in the game). However, 
when composing multiple objectives, they restrict templates substantially which eliminates many compositional solutions that our method retains. 
This results in higher computation times and lower result quality for incremental synthesis compared to our approach.
As no implementation of their method is available, we could not compare both approaches empirically. 

Even without the incremental aspect, synthesizing winning strategies for conjunctions of $ \omega $-regular objectives is known to be a hard problem -- Chatterjee et al. \cite{chatterjee2007:generalizedparitygames} prove that the conjunction of even \emph{two} parity objectives makes the problem NP-complete. They provide a generalization of Zielonka's algorithm, called \genziel for generalized parity objectives (i.e., finite conjunction of parity objectives) which is compared to our tool \toolname in \cref{fig:intro-experiement-sensitivity}.
While \toolname is (in contrast to \genziel) not complete  --- i.e., there exist realizable synthesis problems for which \toolname returns no solution --- 
our prototype implementation returns the full winning region in all $1400$ benchmark instances. 


Fault-tolerant control is a well-established topic in control engineering \cite{faulttolerancebook}, with recent emphasis on the logical control layer \cite{Moor16,FritzZhang18}. While most of this work is conducted in the context of supervisory control, there are also some approaches in reactive synthesis. While \cite{neider2020synthesizing,MeiraGoesKangLafortuneTripakis_22_wodes} considers the \emph{addition} of \enquote{disturbance edges} and synthesizes a strategy that tolerates as many of them as possible, we look at the complementary problem, where edges, in particular system-player edges, disappear. To the best of our knowledge, the only algorithm that is able to tackle this problem without re-computation considers Büchi games \cite{ChatterjeeHenzinger_14_Buechi}. In contrast, our method is applicable to the more expressive class of Parity games.

\section{Preliminaries}\label{sec:prelim}
\noindent\textbf{Notation.}
We use $\mathbb{N}$ to denote the set of natural numbers including zero.
Given two natural numbers $a,b\in\mathbb{N}$ with $a<b$, we use $[a;b]$ to denote the set $\set{n\in\mathbb{N} \mid a\leq n\leq b}$.
For any given set $[a;b]$, we write $i\ineven [a;b]$ and $i\inodd [a;b]$ as shorthand for $i\in [a;b]\cap \set{0,2,4,\ldots}$ and $i\in [a;b]\cap \set{1,3,5,\ldots}$ respectively.
Given two sets $A$ and $B$, a relation $R\subseteq A\times B$, and an element $a\in A$, we write $R(a)$ to denote the set $\set{b\in B\mid (a,b)\in R}$.

\smallskip
\noindent\textbf{Languages.}
Let $\Sigma$ be a finite alphabet.
The notation $\Sigma^*$ and $\Sigma^\omega$ respectively denote the set of finite and infinite words over $\Sigma$, and $\Sigma^\infty$ is equal to $\Sigma^*\cup \Sigma^\omega$.
For any word $w\in \Sigma^\infty$, $w_i$ denotes the $i$-th symbol in $w$.
Given two words $u\in \Sigma^*$ and $v\in \Sigma^\infty$, the concatenation of $u$ and $v$ is written as the word $uv$.

\smallskip
\noindent\textbf{Game Graphs.}
A \emph{game graph} is a tuple $\gamegraph= \tup{V=V^0\cupdot V^1,E}$ where $(V,E)$ is a finite directed graph with \emph{vertices} $ V $ and \emph{edges} $ E $, and 
$ \vertexz, \vertexo\subseteq V$ form a partition of $V$. Without loss of generality, we assume that for every $v\in V$ there exists $v'\in V$ s.t.\ $(v,v')\in E$. 
A \emph{play} originating at a vertex $v_0$ is a finite or infinite sequence of vertices $\rho=v_0v_1\ldots \in V^\infty$.

\smallskip
\noindent\textbf{Winning Conditions/Objectives.}
Given a game graph $\gamegraph$, we consider winning conditions/objectives specified using a formula $\spec$ in \emph{linear temporal logic} (LTL) over the vertex set $V$, that is, we consider LTL formulas whose atomic propositions are sets of vertices $V$. 
In this case the set of desired infinite plays is given by the semantics of $\spec$ which is an $\omega$-regular language $\lang(\spec)\subseteq V^\omega$. 
Every game graph with an arbitrary $\omega$-regular set of desired infinite plays can be reduced to a game graph (possibly with a different set of vertices) with an LTL winning condition, as above.
The standard definitions of $\omega$-regular languages and LTL are omitted for brevity and can be found in standard textbooks \cite{baierbook}. To simplify notation we use $ e=(u,v) $ in LTL formulas as syntactic sugar for $ u\wedge \bigcirc v $, with $ \bigcirc$ as the LTL \emph{next} operator. We further use a set of edges $E' = \set{e_i}_{i\in [0;k]}$ as atomic proposition to denote $\bigvee_{i\in [0;k]} e_i$. 

\smallskip
\noindent\textbf{Games and Strategies.} A \emph{two-player (turn-based) game} is a pair $\game=\tup{\gamegraph,\spec}$ where $G $ is a game graph and 
$ \spec $ is a \emph{winning condition} over $\gamegraph$.
A strategy of $\p{i},~i\in\{0,1\}$, is a function $\strati\colon \vertex^*\vertexi\to \vertex$ such that for every $\play v \in \vertex^*\vertexi$ 
holds that $\strati(\play v)\in \edge(v)$. 
Given a strategy $\strati$, we say that the play $\play=v_0v_1\ldots$ is \emph{compliant} with $\strati$ if $v_{k-1}\in \vertexi$ implies $v_{k} = \strati(v_0\ldots v_{k-1})$ for all $k$.
We refer to a play compliant with $\strati$ and a play compliant with both $\stratz$ and $\strato$ as a \emph{$ \strati $-play} and a \emph{$ \stratz\strato $-play}, respectively. 
We collect all plays originating in a set $ S $ and compliant with $\strati$, (and compliant with both $\stratz$ and $\strato$) in the sets $\lang(S,\strati)$ (and $\lang(S,\stratz\strato)$, respectively). When $ S=V $, we drop the mention of the set in the previous notation, and when $ S $ is singleton $ \{v\} $, we simply write $ \lang(v,\strati) $ (and $ \lang(v,\stratz\strato) $, respectively). 

\smallskip
\noindent\textbf{Winning.}
Given a game $\game=(\gamegraph,\spec)$, a play $ \play $ in $ \game $ is \emph{winning for $ \p{0} $}, if $ \play\in\lang(\spec) $, and it is winning for $ \p{1} $, otherwise. A strategy $\strati$ for $\p{i}$ is \emph{winning from a vertex $ v\in V $} if all plays compliant with $ \strati $ and originating from $ v $ are winning for $ \p{i} $. 
We say that a vertex $v\in V$ is  \emph{winning for $\p{i}$}, if there exists a winning strategy $\strati$ from $ v $. 
We collect all winning vertices of $\p{i}$ in the \emph{$\p{i}$ winning region} $\wini\subseteq V$.
We always interpret winning w.r.t.\ $\p{0}$ if not stated otherwise.

%
%
\smallskip
\noindent\textbf{Strategy Templates.}
%
Let $ \stratz $ be a $ \p{0} $ strategy and $ \spec $ be an LTL formula. Then we say $ \stratz $ \emph{follows} $ \spec $, denoted $ \stratz\stratFollows\spec $, if for all $ \stratz $-plays $ \play $, $ \play$ belongs to $\lang(\spec) $, i.e. $ \lang(\stratz)\subseteq \lang(\spec) $. 
%
We refer to a set $ \assump=\set{\assump_1,\ldots, \assump_k} $ of LTL formulas as \emph{strategy templates} representing the set of strategies that follows $ \assump_1\wedge\ldots\wedge\assump_k $. We say a strategy template $\assump$ is \emph{winning from a vertex $v$} for a game $(\gamegraph, \spec)$ if every $\p{0}$ strategy following the template $\assump$ is winning from $v$. 
Moreover, we say a strategy template $\assump$ is \emph{winning} if it is winning from every vertex in $\wino$.
%
%
In addition, we call $\assump$ \emph{maximally permissive} for $\game$, if every $\p{0}$ strategy $\pi$ which is winning in $\game$ also follows $\assump$.
%
%
With slight abuse of notation, we use $ \assump $ for the set of formulas $ \set{\assump_1,\ldots, \assump_k} $, and the formula $ \assump_1\wedge\ldots\wedge\assump_k $, interchangeably.



%
 
\smallskip
\noindent\textbf{Set Transformers.} 
Let $ \gamegraph=(V=\vertexz\cupdot \vertexo, E) $ be a game graph, $ U\subseteq V $ be a subset of vertices, and $ a\in \{0,1\} $ be the player index. Then 
 \begin{align}
 	\upre{\gamegraph}{U} =& \{v\in V\mid\forall (v,u)\in E.~u\in U  \}\\
 	\cprea{\gamegraph}{U} =& \{v\in V^a\mid \exists (v,u)\in E.~u\in U \}\cup \{v\in V^{1-a}\mid u\in \upre{\gamegraph}{U}  \}
 \end{align}
 The universal predecessor operator $ \textsf{upre}_{\gamegraph}(U) $ computes the set of vertices with all the successors in $ U $ and the controllable predecessor operator $ \textsf{cpre}^a_{\gamegraph}(U) $ the vertices from which $ \newcommand{\tpre}[2]{\textsf{tpre}\ensuremath{_{#1}(#2)}}\p{a} $ can force visiting $ U $ in \emph{exactly} \emph{one} step.
 %
 In the following, we introduce two types of attractor operators: $ \textsf{attr}^a_{\gamegraph}(U) $ that computes the set of vertices from which $ \p{a}$ can force at least a single visit to $ U $ in \emph{finitely many} steps, and the universal attractor $ \textsf{uattr}_{\gamegraph}(U) $ that computes the set of vertices from which both players are forced to visit $ U $. For the following, let $ \mathsf{pre}\in \{\mathsf{upre}, \mathsf{cpre}^a \} $
 \begin{align}
 	\mathsf{pre}_{\gamegraph}^{1}{(U)} =&~\pre{\gamegraph}{U}\cup U &
 	\mathsf{pre}_{\gamegraph}^{i}{(U)}=&~\mathsf{pre}_{\gamegraph}{(\mathsf{pre}_{\gamegraph}^{i-1}{(U)})} \cup\mathsf{pre}_{\gamegraph}^{i-1}{(U)}\\
 	\attra{\gamegraph}{U} =& \cup_{i\geq 1}\cpre{\gamegraph}{U}{a,i}&
 	\uattr{\gamegraph}{U} =& \cup_{i\geq 1}\mathsf{upre}_{\gamegraph}^{i}{(U)}
 \end{align}
 \vspace{-1.1cm}
\


%

\section{Computation of Winning Strategy Templates}\label{sec:computeWST}
Given a 2-player game $ \game $ with an objective $\spec$, the goal of this section is to compute a \emph{strategy template} that characterizes a large class of winning strategies of $ \p{0} $ from a set of vertices $ U\subseteq V $ in a local, permissive, and computationally efficient way. These templates are then utilized in \cref{sec:composition} for computational synthesis. 
In particular, this section introduces three distinct template classes --- safety templates (\cref{sec:safety}), live-group-templates (\cref{sec:live}), and co-live-templates (\cref{sec:colive}) along with algorithms for their computation via safety, Büchi, and co-Büchi games, respectively. We then turn to general parity objectives which can be thought of as a sophisticated combination of Büchi and co-Büchi games.  
We show in \cref{sec:parity} how the three introduced templates can be derived for a general parity objective by a suitable combination of the previously introduced algorithms for single templates. All presented algorithms have the same worst-case computation time as the standard algorithms solving the respective game. This shows that extracting strategy \emph{templates} instead of 'normal' strategies does not incur an additional computational cost.  We prove the soundness of the algorithms and discuss the complexities in \cref{app:templates}.

\subsection{Safety Templates}\label{sec:safety}

We start the construction of strategy templates by restricting ourselves to games with a safety objective --- i.e.,  $\game=(\gamegraph,\spec)$ with $\spec\coloneqq \square U$ for some $U\subseteq V$. A winning play in a safety game never leaves $U\subseteq V$. It is well known that such games allow capturing \emph{all} winning strategies by a simple local template which essentially only allows $\p{0}$ moves from winning vertices to other winning vertices. This is formalized in our notation as a safety template as follows.,



\begin{theorem}[{\cite[Fact 7]{bernet2002:permissiveStrategies}}]\label{prop:safety strategy}
	Let $\game=(\gamegraph,\square U)$ be a safety game with winning region $\wino$ 
	and $ \safegroup = \Set{(u,v)\in E\mid \left(u\in V^0\cap \wino\right) \wedge \left(v \notin \wino\right)}$. Then
	\begin{equation}\label{eq:safety assumption definition}
		\textstyle \assumpsafe(\safegroup) \coloneqq \square \bigwedge_{e\in \safegroup} \neg e,
	\end{equation}
	is a winning strategy template for the game $\game$ which is also maximally permissive.
\end{theorem}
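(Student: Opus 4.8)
The plan is to prove the two assertions --- that $\assumpsafe(\safegroup)$ is \emph{winning} (soundness) and that it is \emph{maximally permissive} --- separately, both resting on the standard fixpoint characterization of the safety winning region. First I would recall that $\wino = V \setminus \attro{\gamegraph}{V \setminus U}$, i.e.\ $\wino$ is the complement of $\p{1}$'s attractor to the unsafe set $V \setminus U$. From this I extract the four facts I use throughout: (i) $\wino \subseteq U$; (ii) every $u \in \wino \cap V^0$ has some successor $(u,v) \in E$ with $v \in \wino$; (iii) every $u \in \wino \cap V^1$ has \emph{all} successors in $\wino$; and (iv) from every $v \notin \wino$, $\p{1}$ has a strategy forcing a visit to $V \setminus U$ in finitely many steps. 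Facts (ii)--(iv) are the usual ``complement of an attractor is a trap'' properties, and (i) holds because $V \setminus U \subseteq \attro{\gamegraph}{V \setminus U}$.

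For soundness, let $\stratz \stratFollows \assumpsafe(\safegroup)$, fix $v_0 \in \wino$, and take any $\stratz$-play $\play = v_0 v_1 \ldots$. I would show by induction that $v_i \in \wino$ for all $i$. The base case is the hypothesis $v_0 \in \wino$. For the step, assume $v_i \in \wino$: if $v_i \in V^1$ then $v_{i+1} \in \wino$ by (iii); if $v_i \in V^0$, then since $\play$ avoids every edge of $\safegroup$ (because $\stratz$ follows the template) and $v_i \in \wino \cap V^0$, the definition of $\safegroup$ forces $v_{i+1} \in \wino$, as otherwise $(v_i,v_{i+1})$ would lie in $\safegroup$. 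With the invariant in hand, (i) gives $v_i \in U$ for all $i$, hence $\play \in \lang(\square U)$; so $\stratz$ is winning from the arbitrary $v_0 \in \wino$, i.e.\ the template is winning.

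For maximal permissiveness I argue the converse on plays originating in $\wino$: if $\stratz$ is winning from every vertex of $\wino$, then every $\stratz$-play started in $\wino$ stays in $\wino$ and therefore never uses a $\safegroup$ edge. Suppose not; let $\play = v_0 v_1 \ldots$ be a $\stratz$-play with $v_0 \in \wino$ and let $k+1$ be the first index with $v_{k+1} \notin \wino$. By (iii) the exit cannot occur at a $\p{1}$ vertex, so $v_k \in \wino \cap V^0$ and $(v_k,v_{k+1}) \in \safegroup$ with $v_{k+1} \notin \wino$. By (iv), $\p{1}$ can force a visit to $V \setminus U$ from $v_{k+1}$; splicing this spoiling behaviour onto the prefix $v_0 \ldots v_{k+1}$ (with $\stratz$ still responding at $\p{0}$ vertices) yields a $\stratz$-play from $v_0 \in \wino$ that leaves $U$, contradicting that $\stratz$ wins from $v_0$. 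Hence such plays stay in $\wino$ and avoid $\safegroup$, so $\stratz$ follows the template.

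The main obstacle is the maximal-permissiveness direction, precisely the spoiling step: I must convert the purely local event ``$\stratz$ takes an edge out of $\wino$'' into a genuine violation of $\square U$, which needs (iv) together with a careful concatenation of the finite prefix with $\p{1}$'s attractor strategy toward $V \setminus U$. A secondary point I would address explicitly is the \emph{scope} of ``following'': since every $\safegroup$ edge emanates from $\wino$ and the argument shows winning/compliant plays never leave $\wino$, I restrict attention to plays originating in $\wino$ (equivalently, reason about each strategy's behaviour on the winning region). This is the regime relevant to both claims and matches the soundness direction, where only plays from $\wino$ are considered.
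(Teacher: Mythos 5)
Your proof is correct, but note that the paper never proves this statement itself: it is imported verbatim as a known fact (Fact~7 of \cite{bernet2002:permissiveStrategies}), so there is no in-paper argument to compare against. Your attractor-based proof is the natural way to discharge it from first principles. The four facts you isolate (that $\wino$ is the complement of $\p{1}$'s attractor to $V\setminus U$, hence contained in $U$, a trap at $\p{1}$ vertices, escapable only through $\safegroup$ edges at $\p{0}$ vertices, and that $\p{1}$ can force unsafety from outside $\wino$) are exactly what the two directions need; your soundness induction is also the same argument the paper itself uses implicitly inside its B\"uchi/co-B\"uchi/parity proofs when it asserts that ``the play never leaves the winning region due to the safety part of the template.'' The splicing step you single out as the main obstacle is indeed where the content lies, and it goes through because $\p{1}$'s attractor strategy can be taken positional, so it forces a visit to $V\setminus U$ from $v_{k+1}$ regardless of how $\stratz$ continues to respond on the full history.

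One point deserves emphasis, and you handled it correctly rather than sweeping it under the rug: the scope restriction in your last paragraph is not optional. Under the paper's literal definitions, $\stratz \stratFollows \assumpsafe(\safegroup)$ quantifies over \emph{all} $\stratz$-plays from \emph{all} vertices, while ``winning'' only constrains plays originating in $\wino$. For history-dependent strategies these genuinely come apart: a strategy can be winning from every vertex of $\wino$ yet take a $\safegroup$ edge on a history that begins outside $\wino$ and enters $\wino$ through a $\p{1}$ move, since such histories never occur in plays that start inside $\wino$. (For positional strategies, the setting of \cite{bernet2002:permissiveStrategies}, the problem disappears, because a choice $\stratz(u)\notin\wino$ at $u\in V^0\cap\wino$ would already be losing from $u$ itself.) So restricting attention to plays originating in $\wino$, or equivalently reading ``maximally permissive'' on the winning region, is a necessary repair of the statement as formalized in this paper, not merely a convenience; flagging it explicitly is the right call.
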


It is easy to see that the computation of the safety template $\assumpsafe(\safegroup)$ reduces to computing the winning region $\wino$ in the safety game $(\gamegraph,\square U)$ and extracting $\safegroup$. We refer to the edges in $\safegroup$ as \emph{unsafe edges} and we call this algorithm computing the set $\safegroup$ as $\safeTemp(\gamegraph,U)$. Note that it runs in $ \bigO(m) $ time, where $ m = |E| $, as safety games are solvable in $ \bigO(m) $ time.

%
%
%
%


\subsection{Live-Group Templates}\label{sec:live}
As the next step, we now move to simple liveness objectives which require a particular vertex set $I\subseteq V$ to be seen infinitely often. Here, winning strategies need to stay in the winning region (as before) but in addition always eventually need to make progress towards the vertex set $I$. We capture this required progress by \emph{live-group templates} --- given a group of edges $H\subseteq E$, we require that whenever a source vertex $v$ of an edge in $H$ is seen infinitely often, an edge $e\in H$ (not necessarily starting at $v$) also needs to be taken infinitely often. This template ensures that compliant strategies always eventually make progress towards $I$, as illustrated by the following example. 

\begin{example}
Consider the game graph in \cref{fig:introExample} where we require visiting $ \{c,d\} $ infinitely often. To satisfy this objective from vertex $ a $, $ \pz $ needs to not get stuck at $ a $, and should not visit $ b $ always (since $ \po $ can force visiting $ a $ again, and stop $ \pz $ from satisfying the objective). Hence, $ \pz $ has to always eventually leave $ a $ and go to $ \{c,d\} $. This can be captured by the live-group $ \{e_{ac},e_{ad}\} $. Now if the play comes to $ a $ infinitely often, $ \pz $ will go to either $ c $ or $ d $ infinitely often, hence satisfying the objective.
\end{example}

Formally, such games are called \emph{Büchi games}, denoted by $\game=(\gamegraph=(V,E),\spec)$ with $\spec\coloneqq \square \lozenge I$, for some $ I\subseteq V $. In addition, a \emph{live-group} $\livegroupSingleN = \Set{e_j}_{j\geq 0}$ is a set of edges $e_j = (s_j,t_j)$ with source vertices $\src(\livegroupSingleN):=\Set{s_j}_{j\geq 0}$. Given a set of live-groups 
 	$\livegroup=\left\{\livegroupSingle\right\}_{i\geq 0}$  we define a live-group template as
\begin{equation}\label{equ:livegroup}
	\assumpgrlive(\livegroup) \coloneqq \bigwedge_{i\geq 0}\square\lozenge src(\livegroupSingle)\implies\square\lozenge \livegroupSingle.
\end{equation}
The live-group template says that if some vertex from the source of a live-group is visited infinitely often, then some edge from this group should be taken infinitely often by the following strategy. 

Intuitively, winning strategy templates for Büchi games consist of a safety template conjuncted with a live-group template. While the former enforces all strategies to stay within the winning region $\mathcal{W}$, the latter enforces progress w.r.t.\ the goal set $I$ within $\mathcal{W}$. Therefore, the computation of a winning strategy template for Büchi games reduces to the computation of the unsafe set $S$ to define  $\assumpsafe(\safegroup)$ in \eqref{eq:safety assumption definition} and the live-group $\livegroup$ to define $\assumpgrlive(\livegroup)$ in \eqref{equ:livegroup}. We denote by $ \buchiTemp(\gamegraph, I) $ the algorithm computing the above as detailed in \cref{alg:buchitemplate}. The algorithm uses some new notations that we define here. Here, the function $\solveBuchi$ solves a Büchi game and returns the winning region (e.g., using the standard algorithm from \cite{chatterjee2008:algorithmsforbuechigames}), $ \edges (X, Y)= \{(u,v)\in E\mid u\in X, v\in Y\} $, is the set of edges between two subsets of vertices $ X $ and $ Y $. $\gamegraph|_U \coloneqq \tup{U=U^0\cupdot U^1, E'}$ s.t.\ $U^0 \coloneqq V^0\cap U$, $U^1 \coloneqq V^1\cap U$, and $E'\coloneqq E\cap (U\times U)$ denotes the restriction of a game graph $\gamegraph \coloneqq \tup{V=V^0\cupdot V^1, E}$ to a subset of its vertices $U\subseteq V$. 
%
%
%
We have the following formal result.

\begin{restatable}{theorem}{restatebuechi}\label{thm:BuechiCont}
	 Given a \buchi game $\game=(\gamegraph, \square \lozenge I)$ for some $ I\subseteq V $, if $(\safegroup, \livegroup)=\buchiTemp(\gamegraph, I)$ then
	 $\assump=\set{\assumpsafe(\safegroup), \assumpgrlive(\livegroup)} $
	 is a winning strategy template for the game $\game$, computable in time $ \bigO(nm) $, where $ n=|V|  $ and $ m=|E| $.
\end{restatable}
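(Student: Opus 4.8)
The plan is to prove the two assertions separately: soundness of the template (that $\assump$ is winning from every vertex of $\wino$) and the $\bigO(nm)$ running time. Throughout I write $\wino$ for the winning region returned by $\solveBuchi(\gamegraph, I)$ and I rely on $\buchiTemp$ having the natural structure suggested by the surrounding text, namely that it sets $\safegroup = \safeTemp(\gamegraph, \wino)$ and, after restricting to $\gamegraph|_{\wino}$, reads off the live-groups $\livegroup$ from a ranked attractor of $I\cap\wino$.

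First I would isolate two structural facts about $\wino$, both consequences of determinacy of \buchi games. (i) Every $\p{1}$ vertex in $\wino$ has all its out-edges inside $\wino$ (otherwise $\p{1}$ could escape to a vertex from which it wins, contradicting membership in $\wino$); hence $\p{1}$ can never leave $\wino$. (ii) The set $\safegroup$ consists exactly of the $\p{0}$-edges leaving $\wino$, so any play compliant with $\assumpsafe(\safegroup)$ that starts in $\wino$ never takes such an edge. Combining (i) and (ii) by induction on the play length shows that every play following $\assump$ from a vertex of $\wino$ stays in $\wino$ forever. This discharges the safety part and lets me work entirely inside $\gamegraph|_{\wino}$.

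The heart of the proof is the live-group part. Since from every vertex of $\wino$ player $0$ can force a visit to $I$ while staying in $\wino$, we have $\attrz{\gamegraph|_{\wino}}{I\cap\wino} = \wino$, which yields a rank function $\rankfun\colon\wino\to\N$ with $\rankfun(v)=0$ iff $v\in I\cap\wino$. The decisive property I would extract from $\buchiTemp$ is that the live-groups refine this ranking: for every source vertex $v$ of positive rank of some $\livegroupSingle\in\livegroup$, all edges of $\livegroupSingle$ lead strictly below $\rankfun(v)$, and every $\p{1}$ vertex of positive rank has all its out-edges leading strictly below its own rank. Now take any play $\rho$ that follows $\assump$ from a vertex of $\wino$; by the previous paragraph it stays in $\wino$. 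Assume towards a contradiction that $\rho$ visits $I$ only finitely often, so that eventually $\Inf(\rho)\subseteq\wino\setminus I$ and every vertex of $\Inf(\rho)$ has positive rank. Let $v^{*}\in\Inf(\rho)$ have minimal rank $m>0$. If $v^{*}\in\vertexo$, its successor along $\rho$ has rank $<m$ and is visited infinitely often, contradicting minimality. If $v^{*}\in\vertexz$, then $v^{*}$ is a source of its live-group $\livegroupSingle$, so $\square\lozenge\src(\livegroupSingle)$ holds; since $\rho$ follows $\assumpgrlive(\livegroup)$, $\square\lozenge\livegroupSingle$ holds, i.e.\ some edge of $\livegroupSingle$ is taken infinitely often, and its target has rank $<m$ and is visited infinitely often --- again contradicting minimality. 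Hence $\rho$ visits $I$ infinitely often and is winning, so $\assump$ is winning from every vertex of $\wino$.

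I expect the subtle point, and the place where the precise definition of $\buchiTemp$ matters, to be exactly this ``refinement'' property: the live-groups must be fine enough that the progress edge forced at the minimal-rank vertex drops strictly below $m$. A single coarse live-group collecting all progress edges would be unsound, since the minimal-rank vertex could be ``served'' by a progress edge emanating from a strictly higher rank while the play itself never descends below $m$; verifying that $\buchiTemp$ groups edges by (at least) their source rank is therefore the crux. The complexity claim is then routine: $\solveBuchi$ runs in $\bigO(nm)$ (e.g.\ \cite{chatterjee2008:algorithmsforbuechigames}), while computing $\safegroup$ via $\safeTemp$, the attractor giving $\rankfun$, and the extraction of $\livegroup$ are each linear in $m$, so the \buchi solve dominates and yields the stated $\bigO(nm)$ bound.
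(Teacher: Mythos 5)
Your safety argument and the final $\bigO(nm)$ bound are fine, but the liveness half of your proof has a genuine gap, and you pointed at it yourself: the ``refinement'' property you need is not merely unverified --- it is \emph{false} for the algorithm the theorem is about. \cref{alg:buchitemplate} does not group progress edges by source rank. Its procedure $\reachTemp$ builds layers $I_0=I$ and $I_j=A_j\cup B_j$ with $A_j=\uattr{\gamegraph}{I_{j-1}}$ and $B_j=\cprez{\gamegraph}{A_j}$, and emits exactly one live-group $\edges(B_j,A_j)$ per iteration of its while loop; such a group mixes edges whose sources and targets have wildly different ranks in your sense (rank in $\attrz{\gamegraph|_{\wino}}{I\cap\wino}$). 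Concretely, take $I=\set{t}$ with a self-loop on $t$, a chain of $\p{0}$ vertices $c_3\to c_2\to c_1\to t$ (each with a single edge), and a $\p{0}$ vertex $d$ with edges to $t$ and to $c_3$. The whole graph is $\uattr{\gamegraph}{\set{t}}$, so $\reachTemp$ terminates after one iteration and returns the single live-group $H_1$ consisting of \emph{all} edges. Now $c_1$ and $d$ are sources of $H_1$ of rank $1$, while the edge $(d,c_3)\in H_1$ has a target of rank $3$; your property ``all edges of the group lead strictly below the rank of every source'' fails (it fails even for the edge's own source $d$). Consequently, at your minimal-rank vertex $v^*$, compliance with $\assumpgrlive(\livegroup)$ only guarantees that \emph{some} edge of the group is taken infinitely often, and that edge's target need not have rank below $m$, so the contradiction does not follow and the proof does not close. (A smaller issue: the extraction of $\livegroup$ by $\reachTemp$ is itself a while loop with up to $n$ iterations, hence $\bigO(nm)$ rather than linear, though the total bound is unaffected.)

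The repair is to run your minimal-element argument with the algorithm's \emph{own} stratification instead of the standard $\p{0}$-attractor rank, which is exactly what the paper does. Let $m$ be the least layer index such that $I_m\setminus I_{m-1}$ is visited infinitely often. If vertices of $A_m\setminus I_{m-1}$ recur, then, because $A_m$ is a \emph{universal} attractor --- every vertex of $A_m\setminus I_{m-1}$, of either player, has \emph{all} outgoing edges descending toward $I_{m-1}$ --- the play is forced into $I_{m-1}$ infinitely often, contradicting minimality of $m$; note that this case also disposes of the $\p{0}$ vertices that trouble your case split, without invoking any live-group. If instead only vertices of $B_m\setminus A_m$ recur, compliance with the live-group $\edges(B_m,A_m)$ forces infinitely many entries into $A_m$, reducing to the previous case. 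One additionally needs the paper's termination lemma that the layers exhaust the winning region, i.e.\ $I_k=\wino$ (which you implicitly replaced by the correct but different fact $\attrz{\gamegraph|_{\wino}}{I\cap\wino}=\wino$). So your overall architecture --- confine the play to $\wino$, then contradict a minimal recurring element by descent --- matches the paper's, but the witness of descent must be the universal-attractor layering produced by $\reachTemp$, not the $\p{0}$-attractor rank; with your rank function the stated theorem cannot be proved for \cref{alg:buchitemplate}.
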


	\begin{algorithm}[t]
		\caption{$ \buchiTemp(\gamegraph, I) $}\label{alg:buchitemplate}
		\begin{algorithmic}[1]
			\Require A game graph $\gamegraph$, and a subset of vertices $ I$
			\Ensure A set of unsafe edges $ \safegroup $ and a set of live-groups $ \livegroup $
			\State $\wino \gets \solveBuchi (\gamegraph, I)$; $ \safegroup\gets \safeTemp(\gamegraph,\wino) $;
			\State $\gamegraph \gets \gamegraph|_{\wino}$; $I \gets I\cap \wino$; \label{step:restrict to buchi winning region}
			\State $ \livegroup\gets \reachTemp(\gamegraph, I) $;
			\State \Return $ (\safegroup, \livegroup) $
			\Procedure{\reachTemp}{$ \gamegraph, I\subseteq V $}
				\State $ \livegroup\gets\emptyset $;
				\While{$I \neq V$}
					\State $ A\gets \uattr{\gamegraph}{I} $; 
					 $ B \gets \cprez{\gamegraph}{A} $; 
					 $ \livegroup \gets \livegroup \cup \{\edges(B, A)\}$; 
					 $ I \gets A \cup B$;
				\EndWhile
				\State \Return $ \livegroup $
			\EndProcedure
		\end{algorithmic}
	\end{algorithm}

While live-group templates capture infinitely many winning strategies in \buchi games, they are \emph{not} maximally permissive, as exemplified next.
\begin{example}
 Consider the game graph in \cref{fig:introExample} restricted to the vertex set $ \{a,b,d\} $  with the \buchi objective $ \square\lozenge d $. Our algorithm outputs the live-group template $ \assump=\assumpgrlive(\{e_{ad}\}) $. Now consider the winning strategy with memory that takes edge $ e_{da} $ from $ d $, and takes $ e_{ab} $ for play suffix $ bda $ and $ e_{ad} $ for play suffix $ aba $. This strategy does not follow the template --- the play $ (abd)^{\omega} $ is in $ \lang(\stratz) $ but not in $ \lang(\assump) $.
\end{example}

%
%
%


\subsection{Co-Live Templates}\label{sec:colive}
We now turn to yet another objective which is the dual of the one discussed before. The objective requires that eventually, only a particular subset of vertices $ I $ is seen. A winning strategy for this objective would try to restrict staying or going away from $ I $ after a finite amount of time. It is easy to notice that live-group templates can not ensure this, but it can be captured by \emph{co-live templates}: given a set of edges, eventually these edges are not taken anymore. Intuitively, these are the edges that take or keep a play away from $ I $.

\begin{example}
	Consider the game graph in \cref{fig:introExample} where we require eventually stop visiting $ b $, i.e. staying in $ I=\{a,c,d\} $. To satisfy this objective from vertex $ a $, $ \pz $ needs to stop getting out of $ I $ eventually. Hence, $ \pz $ has to stop taking the edges $ \{e_{ab}, e_{db}, e_{de}\} $, which can be ensured by marking both edges co-live. Now since no edges are leading to $ b $, the play eventually stays in~$ I $, satisfying the objective. We note that this can not be captured by live-groups $ \{e_{aa},e_{ac},e_{ad}\} $ and $\{e_{da}\}$, since now the strategy that visits $ c $ and $ b $ alternatively from $\p{0}$'s vertices, does not satisfy the objective, but follows the live-group.
\end{example}

Formally, a co-B\"uchi game is a game $\game=(\gamegraph,\spec)$ with \cobuchi winning condition $\spec\coloneqq \lozenge\square I$, for some goal vertices $ I\subseteq V $. A play is winning for $ \pz $ in such a co-Büchi game if it eventually stays in $ I $ forever. The \emph{co-live} template is defined by a set of \emph{co-live} edges $ \colivegroup $ as follows,

\[ \assumpdep(\colivegroup) \coloneqq \bigwedge_{e\in\colivegroup}\lozenge\square \neg e. \]

The intuition behind the winning template is that it forces staying in the winning region using the safety template, and ensures that the play does not go away from the vertex set $ I $ infinitely often using the co-live template. We provide the procedure in \cref{alg:coBuchitemplate} and its correctness in the following theorem. Here, $ \solveCobuchi(\gamegraph, I) $ is a standard algorithm solving the \cobuchi game with the goal vertices $ I $, and outputs the winning regions for both players~\cite{chatterjee2008:algorithmsforbuechigames}. We also use the standard algorithm $ \solveSafety(\gamegraph, I) $ that solves the safety game with the objective to stay in $ A $ forever.


\begin{algorithm}[t]
	\caption{$ \coBuchiTemp(\gamegraph, I) $}\label{alg:coBuchitemplate}
	\begin{algorithmic}[1]
		\Require A game graph $\gamegraph$, and a subset of vertices $ I$
		\Ensure A set of unsafe edges $ \safegroup $ and a set of co-live edges $ \colivegroup $
		\State $ \safegroup \gets\emptyset $; $ \colivegroup\gets\emptyset $
		\State $\wino \gets \solveCobuchi (\gamegraph, I) $; $ S\gets \safeTemp(\gamegraph,\wino) $
		\State $\gamegraph \gets \gamegraph|_{\wino}$; $I \gets I\cap \wino$;\label{step:reduce to cobuchi region}
		\While{$V \neq \emptyset$}
		\State $ A\gets \solveSafety({\gamegraph},{I}) $; \label{step:colive to force staying in safety winning vertices}
		$ \colivegroup \gets \colivegroup \cup \edges(A, V\backslash A) $;
		\While{$\cprez{\gamegraph}{A} \neq A$} \label{step:inner while loop begins} \Comment{Outputs $ \textsf{attr}^0_{\gamegraph}(A) $}
		\State $ B\gets \cprez{\gamegraph}{A} $;
		\State $ \colivegroup \gets \colivegroup \cup \edges(B, V\backslash (A\cup B)) \cup \edges(B,B) $;\label{step:colive to force going toward safety winning vertices}
		\State $ A\gets A\cup B $;
		\EndWhile \label{step:inner while loop exits}
		\State $ G\gets G|_{V\backslash A} $;
		 $ I \gets I \cap V\backslash A$;
		\EndWhile
		\State 	\Return $ (\safegroup, \colivegroup) $
	\end{algorithmic}
\end{algorithm}
\begin{restatable}{theorem}{restatecobuechi}\label{thm:coBuechiCont}
	Given a \cobuchi game $\game=(\gamegraph,\lozenge \square I)$ for some $ I\subseteq V $, if $ (\safegroup,\colivegroup)=\coBuchiTemp(\gamegraph,I) $ then
	$\assump=\set{\assumpsafe(\safegroup), \assumpdep(\colivegroup)} $
	is a winning strategy template for $ \pz $, computable in time $ \bigO(nm) $ with $ n=|V|  $ and $ m=|E| $.
	
\end{restatable}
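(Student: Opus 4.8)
The goal is to establish two things: that every $\pz$ strategy following $\assump=\set{\assumpsafe(\safegroup),\assumpdep(\colivegroup)}$ is winning from each vertex of $\wino=\solveCobuchi(\gamegraph,I)$, and that the computation runs in $\bigO(nm)$. I would first dispatch the safety part: since $\safegroup=\safeTemp(\gamegraph,\wino)$ consists exactly of the $\pz$-edges leaving $\wino$, \REFthm{prop:safety strategy} gives that every play compliant with $\assumpsafe(\safegroup)$ and starting in $\wino$ stays in $\wino$ forever. It then remains to show that any such play which additionally obeys $\assumpdep(\colivegroup)$ satisfies $\lozenge\square I$.

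The backbone of the argument is the layer decomposition induced by the outer loop. In its $k$-th round the loop works on the subgame $\gamegraph_k\coloneqq\gamegraph|_{V_k}$ with $V_k=\wino\setminus(L_1\cup\dots\cup L_{k-1})$, extracts the safe core $A_k\coloneqq\solveSafety(\gamegraph_k,I\cap V_k)$, and then its attractor $L_k\coloneqq\attr{\gamegraph_k}{A_k}{0}$, recording co-live edges as it goes. I would assign to every vertex of $\wino$ the rank $(k,j)$, where $k$ indexes the layer $L_k$ containing it and $j\ge 0$ is the level at which it enters $L_k$ (so $j=0$ on the safe core $A_k$ itself), ordered lexicographically. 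The crux is the following bookkeeping, verified directly from the loop body together with the defining properties of $\solveSafety$ and $\cprez{\gamegraph_k}{\cdot}$: (i) every non-co-live $\pz$-edge either preserves a rank of the form $(k,0)$ (staying inside a safe core) or strictly decreases the rank; (ii) every $\po$-edge never increases the rank and strictly decreases it from any level $j\ge 1$; and (iii) the edges marked co-live in round $k$ are exactly the $\pz$-edges that strictly increase the rank or keep it at some $(k,j)$ with $j\ge 1$. Point (ii) rests on the fact that $\po$-edges from a safe core stay in the safe core within $\gamegraph_k$, that a $\po$-vertex entering the attractor at level $j$ sends all its $\gamegraph_k$-edges to strictly lower levels, and that edges leaving $V_k$ can only reach already-removed lower layers; in particular (ii) and (iii) together show that no $\po$-edge is co-live, so the template is non-vacuously followable by $\pz$.

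With this in hand, soundness is a well-foundedness argument. Fix a play $\play$ compliant with $\assump$ and starting in $\wino$. The safety part keeps $\play$ inside $\wino$, so every vertex of $\play$ has a rank; and since $\play$ obeys $\assumpdep(\colivegroup)$ and, by (iii), all co-live edges are $\pz$-edges, there is a time after which $\play$ traverses no co-live edge. From that time on (i) and (ii) make the rank sequence non-increasing, hence, as the ranks lie in a finite linear order, eventually constant, say at $(k^*,j^*)$. Were $j^*\ge 1$, the transition out of a level-$j^*$ vertex would strictly decrease the rank by (i)--(ii), contradicting constancy; hence $j^*=0$, i.e.\ $\play$ remains forever inside the safe core $A_{k^*}\subseteq I$, giving $\play\models\lozenge\square I$ as required.

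The step I expect to be the main obstacle is showing that the layer decomposition is exhaustive --- that the outer loop removes a nonempty $L_k$ in each round and halts exactly when $V_k=\emptyset$, so that the $L_k$ genuinely partition $\wino$. This reduces to proving that $\pz$ wins $\lozenge\square I$ on every subgame $\gamegraph_k$ from all of $V_k$, which I would establish by induction on $k$: since a $\pz$-vertex outside $\attr{\gamegraph_k}{A_k}{0}$ has no edge into $L_k$ while a $\po$-vertex outside it has an edge avoiding $L_k$, the restriction $\gamegraph_k|_{V_k\setminus L_k}$ is a genuine subgame in which $\pz$ keeps all of its moves and $\po$ only loses the option of escaping into $L_k$; consequently a $\pz$ strategy winning on $\gamegraph_k$ still wins on $\gamegraph_{k+1}$. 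Given winning on $\gamegraph_k$, the standard fixed-point fact that a nonempty co-B\"uchi winning region contains a nonempty safe core yields $A_k\neq\emptyset$, hence $L_k\neq\emptyset$ and strict progress. Finally, for the complexity: $\solveCobuchi$ and each of the at most $n$ calls to $\solveSafety$ cost $\bigO(m)$, and because the attractor computations run on the pairwise-disjoint sets $L_k$ their combined cost is $\bigO(m)$; the total is therefore $\bigO(nm)$.
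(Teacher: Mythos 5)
Your proof is correct, and its skeleton coincides with the paper's appendix proof: the same layer decomposition $L_k=\attr{\gamegraph_k}{A_k}{0}$ over safe cores $A_k$, the same exhaustiveness/termination argument (removing a $\pz$-attractor leaves a genuine subgame in which $\pz$'s co-B\"uchi winning strategies survive, and a nonempty winning region contains a nonempty safe core, so the loop makes strict progress), and the same $\bigO(nm)$ accounting. Where you genuinely depart is in how soundness is organized. The paper proves, by induction on $i$, the play-level statement that any compliant play eventually confined to the union of the first $i$ layers is winning; the co-live edges are used twice, once inside a layer to force descent to the safe core, and once across layers to exclude infinitely many returns from earlier layers to later ones. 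You fold both descents into a single lexicographic rank $(k,j)$ and prove edge-level monotonicity (your (i)--(iii)), after which soundness is one uniform well-foundedness step: once the play stops using co-live edges, the rank is non-increasing, hence eventually constant, and constancy forces $j=0$, i.e., permanent confinement to a safe core contained in $I$. Your packaging is more explicit about which edges are progress edges and fills in a step the paper leaves terse (why returns from $V_{i-1}$ into layer $i$ cannot recur); the paper's induction, in exchange, never needs the \enquote{exactly} direction of your (iii). One caveat, shared with the paper rather than a gap in your argument: your claim (iii) and your remark that no $\po$-edge is ever co-live require reading $B\gets\cprez{\gamegraph}{A}$ in \cref{alg:coBuchitemplate} as the frontier $\cprez{\gamegraph}{A}\setminus A$; under the literal text $B$ may overlap $A$, and $\edges(B,B)$ would then also mark co-live some rank-decreasing, safe-core-internal, and even $\po$-edges. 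The paper's own proofs of \cref{thm:coBuechiCont} and \cref{prop:implementability} silently assume this frontier reading, and your claims (i)--(ii) --- which are all that your soundness argument actually uses --- hold under either reading.
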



\subsection{Parity Games}\label{sec:parity}

We now consider a more complex but canonical class of $ \omega $-regular objectives. Parity objectives are of central importance in the study of synthesis problems as they are general enough to model a huge class of qualitative requirements of cyber-physical systems, while enjoying the properties like positional determinacy. 

A parity game is a game $\game=(\gamegraph,\spec)$ with parity winning condition $\spec = \paritygame(\priority)$, where

\begin{equation}
 \textstyle\paritygame(\priority)\coloneqq \bigwedge_{i\inodd [0;k]} \left(\square\lozenge P_i \implies \bigvee_{j\ineven [i+1;k]} \square\lozenge P_j\right), 
\end{equation}
%
with $ P_i=\{q\in Q\mid \priority(q)=i \} $ for some priority function $ \priority: V\rightarrow [0;d] $ that assigns each vertex a priority. A play is winning for $ \pz $ in such a game if the maximum of priorities seen infinitely often is even. 

Although parity objectives subsume previously described objectives, we can construct strategy templates for parity games using the combinations of previously defined templates. To this end, we give the following algorithm. 

\begin{algorithm}[t]
	\caption{$ \parityTemp(\gamegraph,\priority) $}
	\label{alg:parityTemplateZeilonka}
	\begin{algorithmic}[1]
		\Require A game graph $\gamegraph$, and a priority function $\priority:V\rightarrow \{0,\ldots, d\}$
		\Ensure Winning regions $(\wino,\winl),$ live-groups $\livegroup,$ and co-live edges $\colivegroup$
		\If {$ d $ is odd}
			\State $ A=\attro{\gamegraph}{P_d} $
			\If {$ A=V $}
				 \Return $ (\emptyset, V), \emptyset, \emptyset $
			\Else
				\State $ (\wino,\winl),\livegroup,\colivegroup \gets \parityTemp (\gamegraph|_{V\backslash A},\priority) $
				\If{$ \wino=\emptyset $}
					 \Return $ (\emptyset, V), \emptyset, \emptyset $
				\Else
					\State $ B = \attrz{\gamegraph}{\wino} $
					\State $ \colivegroup \gets \colivegroup\cup \edges(\wino, V\backslash \wino) $ \label{alg:step:noleavewo}
					\State $ \livegroup\gets \livegroup \cup \reachTemp(\gamegraph,\wino) $ \label{alg:step:reachwo}
					\State $ (\wino',\winl'),\livegroup',\colivegroup' \gets \parityTemp (\gamegraph|_{V\backslash B},\priority) $
					\State \Return $ (\wino'\cup B,\winl'),\livegroup\cup\livegroup',\colivegroup\cup\colivegroup'$
				\EndIf
			\EndIf
		\Else \Comment{If $ d $ is even}
			\State $ A=\attrz{\gamegraph}{P_d} $\label{step:attractevend}
			\If{A=V}\label{step:checkwholeattracts}
				 \Return  $ (V,\emptyset), \reachTemp(\gamegraph, P_d), \emptyset $\label{step:reachevend}
			\Else
				\State $ (\wino,\winl),\livegroup,\colivegroup \gets \parityTemp (\gamegraph|_{V\backslash A},\priority) $
				\If{$ \winl=\emptyset $}\label{step:nooddwinning}
					 \Return $ (V,\emptyset), \livegroup\cup \reachTemp(\gamegraph|_A, P_d), \colivegroup $\label{alg:step:reachhigheven}
				\Else
				\State $ B = \attro{\gamegraph}{\winl} $
				\State $ (\wino',\winl'),\livegroup',\colivegroup' \gets \parityTemp (\gamegraph|_{V\backslash B},\priority) $
				\State \Return $ (\wino',\winl'\cup B),\livegroup',\colivegroup'$
				\EndIf
				\EndIf
			\EndIf
	\end{algorithmic}
\end{algorithm}

\begin{restatable}{theorem}{restateparity}\label{thm:ParityCont}
	Given a parity game $\game=(\gamegraph,\paritygame(\priority))$ with priority function $ \priority:V\rightarrow [0;d] $, if $((\wino, \winl), \livegroup,\colivegroup)=\parityTemp(\gamegraph, \priority)$, then
	$ \assump=\set{\assumpsafe(\safegroup), \assumpgrlive(\livegroup), \assumpdep(\colivegroup)} $
	is a winning strategy template for the game $\game$, where $ \safegroup = \edges(\wino,\winl) $. Moreover, the algorithm terminates in time $ \bigO(n^{d+\bigO(1)}) $, which is same as that of Zielonka's algorithm.
\end{restatable}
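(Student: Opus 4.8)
\textbf{Proof plan for \cref{thm:ParityCont}.}

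The plan is to prove soundness by induction on the number of priorities $d$, following the recursive structure of $\parityTemp$ and mirroring the classical correctness argument for Zielonka's algorithm. The induction hypothesis states that for any parity game on a subgraph with strictly fewer priorities, the returned tuple $((\wino,\winl),\livegroup,\colivegroup)$ correctly partitions the vertices into the two winning regions and that the conjunction $\assumpsafe(\edges(\wino,\winl)) \wedge \assumpgrlive(\livegroup)\wedge\assumpdep(\colivegroup)$ is a winning strategy template from every vertex of $\wino$. The base case is a single priority, where the game is trivially won by one player and the returned templates are empty or a single $\reachTemp$ call, whose correctness follows from \cref{thm:BuechiCont}. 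I would split the inductive step into the two symmetric branches of the algorithm ($d$ odd versus $d$ even).

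First I would treat the even case, since it is the one where $\pz$ owns the top priority. After removing $A=\attrz{\gamegraph}{P_d}$ and recursing, the subcase $\winl=\emptyset$ means $\pz$ wins everywhere: I must argue that combining the recursively computed live-groups and co-live edges with the fresh $\reachTemp(\gamegraph|_A,P_d)$ live-group yields a winning template. The key observation is that any compliant play either eventually stays in $V\setminus A$ (where the recursion's template guarantees winning) or visits $P_d$ infinitely often; the $\reachTemp$ live-group on $A$ forces the latter whenever the attractor region is seen infinitely often, so the maximal priority seen infinitely often is the even value $d$. In the subcase $\winl\neq\emptyset$, I remove $B=\attro{\gamegraph}{\winl}$ and recurse again, returning $B$ as part of $\po$'s region; here I must verify that $\po$ indeed wins from $B$ (standard attractor-strategy reasoning) and that the template returned from $V\setminus B$ remains winning for $\pz$, using that no edges leaving $\wino'$ into $B$ are permitted by the safety component $\edges(\wino',\winl'\cup B)$.

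The odd case is dual but requires more care because $\po$ owns the top priority, so the algorithm uses co-live edges rather than live-groups to constrain plays. After removing $A=\attro{\gamegraph}{P_d}$ and recursing, if $\wino\neq\emptyset$ I enlarge to $B=\attrz{\gamegraph}{\wino}$, mark all edges leaving $\wino$ as co-live (line \ref{alg:step:noleavewo}), add a $\reachTemp$ live-group driving plays into $\wino$ (line \ref{alg:step:reachwo}), and recurse on $V\setminus B$. The crucial argument is that the conjunction of the co-live template (eventually the play never leaves $\wino$) with the live-group (the play is eventually attracted into $\wino$) ensures every compliant play eventually enters and stays in $\wino$, where the recursively verified template takes over; combined with positional determinacy of parity games, this shows $\pz$ wins from $B$. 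I would then assemble the global template by taking unions, observing that the safety component $\edges(\wino,\winl)$ correctly forbids only edges crossing from $\pz$'s final region into $\po$'s.

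The main obstacle I anticipate is the interaction between the three template types across recursion levels: specifically, showing that co-live edges introduced at an outer odd level do not conflict with live-groups introduced deeper, and that a single compliant strategy can simultaneously honor all of them on any infinite play. The delicate point is that a co-live edge forbidden only \emph{eventually} must be shown compatible with a live-group that must fire \emph{infinitely often} — I would resolve this by arguing that on any fixed compliant play the co-live constraints collapse the reachable region to a fixed attractor block after finitely many steps, after which only the relevant live-groups and the inner template remain active, so no genuine conflict arises. For the complexity claim I would simply observe that $\parityTemp$ makes the same two recursive calls on subgraphs obtained by removing a top-priority attractor as Zielonka's algorithm, and that each added $\reachTemp$, $\safeTemp$, and attractor computation costs only $\bigO(nm)$, which is absorbed into the $\bigO(n^{d+\bigO(1)})$ bound.
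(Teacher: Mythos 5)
Your overall strategy---a structural induction that follows the recursion of \cref{alg:parityTemplateZeilonka}, with the even and odd branches handled separately---is essentially the paper's own approach (the paper inducts on the number of vertices rather than on the number of priorities, which is immaterial), and your even case and your complexity argument match the paper's proof. However, your odd case contains a step that fails. You claim that the conjunction of the co-live edges from \cref{alg:step:noleavewo} and the live-groups from \cref{alg:step:reachwo} ``ensures every compliant play eventually enters and stays in $\wino$.'' This is false: a live-group template only fires if some source vertex of the group is visited infinitely often, so a compliant play may simply avoid $B$ from some point onward and settle forever inside $\wino'$, the $\p{0}$ region returned by the \emph{second} recursive call on $\gamegraph|_{V\setminus B}$. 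Such a play never enters $\wino$, yet it is perfectly compliant, since both the co-live and the live-group constraints are vacuously satisfied on its tail. The same can happen for plays starting in $B$: a $\p{0}$ vertex of $B$ may have template-allowed edges leaving $B$ into $\wino'$, so even those plays need not ever reach $\wino$.

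The correct argument---the one the paper gives---is a dichotomy: either the play visits $\wino$ infinitely often, in which case the co-live edges of \cref{alg:step:noleavewo} force it to eventually remain in $\wino$ and the induction hypothesis for the \emph{first} recursive call applies; or it visits $\wino$ only finitely often, in which case it also cannot visit $B$ infinitely often (the live-groups of $\reachTemp(\gamegraph,\wino)$ added in \cref{alg:step:reachwo} would then force infinitely many returns to $\wino$, a contradiction), so it eventually remains in $\wino'$ and wins by the induction hypothesis applied to the \emph{second} recursive call. Your plan computes this second recursion but never uses its template in the winning argument, and that is exactly the missing piece; positional determinacy, which you invoke instead, plays no role in the paper's proof. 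Two smaller remarks: verifying that $\p{1}$ wins from $B$ in the even case is only needed if you want the returned partition to be the exact winning regions (the paper does not prove this side), and the conflict-freeness concerns in your last paragraph belong to \cref{prop:implementability}, which is proved separately---for the present theorem a template with no compliant strategies would be vacuously winning, so no compatibility argument is required here.
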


\begin{figure}[b]
	\vspace{-0.5cm}
	\centering
	\begin{tikzpicture}
		
		\node[player0, label={below:$p_1$}] (1) at (0, 0) {$a$};
		\node[player0, label={below:$p_4$}] (2) at (\fpeval{1*\pos}, 0) {$b$};
		\node[player1, label={below:$p_5$}] (3) at (\fpeval{2*\pos},0) {$c$};
		\node[player0, label={below:$p_6$}] (4) at (\fpeval{3*\pos},0) {$d$};
		\node[player1, label={below:$p_2$}] (5) at (\fpeval{4*\pos},0) {$e$};
		\node[player0, label={below:$p_2$}] (6) at (\fpeval{5*\pos},0) {$f$};
		\node[player0, label={below:$p_1$}] (7) at (\fpeval{6*\pos},0) {$g$};
		\node[player0, label={below:$p_3$}] (8) at (\fpeval{3.5*\pos},\pos) {$h$};
		
		\path[->] (1) edge[loop above] () edge[blue!60,bend left = 20,dashed] (2);  		
		\path[->] (2) edge[bend left=20] (1) edge[bend left = 20, red!60, dotted] (3);  		
		\path[->] (3) edge[bend left=20] (2);
		\path[->] (4) edge (3);
		\path[->] (5) edge (4) edge (6);
		\path[->] (6) edge[loop above, blue!60,dashed] ();
		\path[->] (7) edge[loop above, blue!60,dashed] () edge (6);
		\path[->] (8) edge[loop left] () edge[blue!60,dashed] (4) edge (5);
	\end{tikzpicture}
	\vspace{-0.6cm}
	\caption{A parity game, where a vertex with priority $ i $ has label $ p_i $. The dotted edge in {\color{red!60} red} is a co-live edge, while the dashed edges in {\color{blue!60}blue} are singleton live-groups.}\label{fig:parityExample}
	\vspace{-0.5cm}
\end{figure}
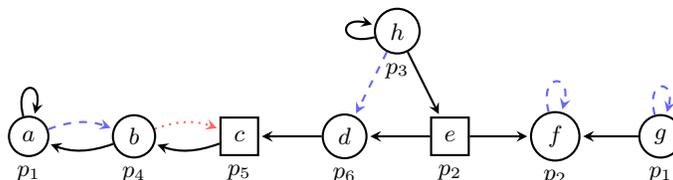

We again postpone the proof to the Appendix in \cref{app:parity}, but provide the intuition behind the algorithm and the computation of the algorithm on the parity game in \cref{fig:parityExample}. The algorithm follows the divide-and-conquer approach of Zeilonka's algorithm. Since the highest priority occurring is $ 6 $ which is even, we first find the vertices $ A=\{d,h\} $ from which $ \pz $ can force visiting $ \{d\} $ (vertices with priority 6) in \cref{step:attractevend}. Then since $ A\not=V $, we find the winning strategy template in the rest of the graph $ \gamegraph_1=\gamegraph|_{V\backslash  A} $. Then the highest priority $ 5 $ is odd, hence we compute the region $ \{c\} $ from which $ \po $ can ensure visiting $ 5 $. We again restrict our graph to $ \gamegraph_2 =G|_{\{a,b,e,f,g\}} $. Again, the highest priority is even. We further compute the region $ A_2=\{a,b\} $ from which $ \pz $ can ensure visiting the priority $ 4 $, giving us $ G_3=G|_{\{e,f,g\}} $. In $ G_3 $, $ \pz $ can ensure visiting the highest priority $ 2 $, hence satisfying the condition in \cref{step:checkwholeattracts}. Then since in this small graph, $ \pz $ needs to keep visiting priority $ 2 $ infinitely often, which gives us the live-groups $ \{e_{gf}\} $ and $ \{e_{ff}\} $ in \cref{step:reachevend}. Coming one recursive step back to $ G_2 $, since $ G_3 $ doesn't have a winning vertex for $ \po $, the if condition in the \cref{step:nooddwinning} is satisfied. Hence, for the vertices in $ A_2 $, it suffices to keep visiting priority $ 4 $ to win, which is ensured by the live-group $ \{e_{ab}\} $ added in the \cref{alg:step:reachhigheven}. Now, again going one recursive step back to $ G_1 $, we have $ \wino=\{a,b,e,f,g\} $. If $ \pz $ can ensure reaching and staying in $ \wino $ from the rest of the graph $ G_1 $, it can satisfy the parity condition. Since from the vertex~$ c $, $ \wino $ will anyway be reached, we get a co-live edge $ e_{bc} $ in \cref{alg:step:noleavewo} to eventually keep the play in $ \wino $. Coming back to the initial recursive call, since now again $ G_1 $ was winning for $ \pz $, they only need to be able to visit the priority $ 6 $ from every vertex in $ A $, giving another live-group $ \{e_{hd}\} $.


\section{Extracting Strategies from Strategy Templates}\label{sec:implementability}
This section discusses how a strategy that follows a computed winning strategy template can be extracted from the template. As our templates are just particular LTL formulas, one can of course use automata-theoretic techniques for this.
However, as the types of templates we presented put some local restrictions on strategies, we can extract a strategy much more efficiently.
For instance, the game in \cref{fig:introExample} with strategy template $\assump= \assumpgrlive(\{e_{ac},e_{ad}\})$ 
allows the strategy that simply uses the edges $e_{ac}$ and $e_{ad}$ alternatively from vertex $a$.

However, strategy extraction is not as straightforward for every template, even if it only conjuncts the three template types we introduced in \cref{sec:computeWST}. For instance, consider again the game graph from \cref{fig:introExample} with a strategy template \linebreak $\assump =\set{\assumpsafe(e_{ac}, e_{ad}), \assumpdep(e_{aa}, e_{ab})}$. Here, non of the four choices of $\p{0}$ (i.e., outgoing edges) from vertex $a$ can be taken infinitely often, and, hence, the only way a play satisfies $\assump$ is to not visit vertex~$a$ infinitely often. 
%
On the other hand, given strategy template $\assump' =\{ \assumpdep(e_{ab}, e_{db}), \assumpgrlive(\{e_{ab},e_{ac},e_{db}\})\}$, edge $e_{db}$ is both live and co-live, which raises a conflict for vertex $d$. Hence, the only way a strategy can follow $\assump'$ is again to ensure that $d$ is not visited infinitely often. 
%
We call such situations \emph{conflicts}. 
Interestingly, the methods we presented in \cref{sec:computeWST} never create such conflicts and the computed templates are therefore \emph{conflict-free}, as formalized next and proven in \cref{app:prop:implementability}. 
\begin{definition}\label{def:implementability}
A strategy template $\assump = \set{\assumpsafe(\safegroup), \assumpdep(\colivegroup), \assumpgrlive(\livegroup)}$ in a game graph $\gamegraph = (V,E)$ is \emph{conflict-free} if the following are true:
\begin{compactenum}[(i)]
\item or every vertex $v$, there is an outgoing edge that is neither co-live nor unsafe, i.e., 
$v\times E(v) \not \subseteq D \cup S$, and
\item for every source vertex $v$ in a live-group $\livegroupSingleN\in \livegroup$, there exists an outgoing edge in $\livegroupSingleN$ which is neither co-live nor unsafe, i.e.,
$v\times \livegroupSingleN(v) \not \subseteq D \cup S$.
\end{compactenum}
\end{definition}

\begin{restatable}{proposition}{restateImplementability}\label{prop:implementability}
 Algorithms~\ref{alg:buchitemplate}, \ref{alg:coBuchitemplate}, and \ref{alg:parityTemplateZeilonka} 
 always return conflict-free templates.
\end{restatable}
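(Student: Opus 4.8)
The plan is to prove conflict-freeness separately for each of the three algorithms, since the two conditions in \cref{def:implementability} concern the interaction of the co-live set $\colivegroup$, the unsafe set $\safegroup$, and the live-group set $\livegroup$. For each algorithm I would track exactly when edges are added to $\colivegroup$ and $\safegroup$, and when an edge is designated as (a source of) a live-group, and argue that at no vertex do these designations exhaust all outgoing edges. The key structural fact I expect to use repeatedly is that every vertex has at least one outgoing edge (the standing assumption on game graphs), and that the attractor and $\cpre$ computations always leave a vertex with a distinguished ``progress'' edge that is \emph{not} simultaneously marked unsafe or co-live.

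First I would handle \cref{alg:buchitemplate} (B\"uchi). Here $\colivegroup = \emptyset$, so condition (i) reduces to showing every vertex has a non-unsafe outgoing edge; but the safety template only forbids edges leaving the winning region $\wino$, and since $\wino$ is a $\pz$-dominion every $\pz$-vertex in $\wino$ retains an edge staying inside $\wino$, giving (i). For condition (ii) I would inspect $\reachTemp$: each live-group is $\edges(B,A)$ where $B = \cprez{\gamegraph}{A}$, so by definition of $\cprez$ every source vertex $v\in B$ has an outgoing edge into $A$, i.e.\ an edge inside the live-group; since $\colivegroup=\emptyset$ and the edge stays in $\wino$ it is neither co-live nor unsafe, so (ii) holds.

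Next I would treat \cref{alg:coBuchitemplate} (co-B\"uchi), where now $\livegroup = \emptyset$ so only condition (i) is at stake. The subtlety is that edges are added to $\colivegroup$ in two places: the edges leaving the current safe region $A$ (\cref{step:colive to force staying in safety winning vertices}) and, inside the attractor loop, the edges $\edges(B, V\setminus(A\cup B))$ together with the self-loops $\edges(B,B)$ (\cref{step:colive to force going toward safety winning vertices}). I would argue that for a vertex $v\in B = \cprez{\gamegraph}{A}$ there is always an outgoing edge into $A$ that is \emph{not} placed in $\colivegroup$ (the progress edge toward the safety region), and that for $v\in A$ the safety game guarantees an edge staying inside $A$ which is never marked co-live; since $\safegroup$ only removes region-leaving edges, this surviving edge is neither co-live nor unsafe, giving (i).

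Finally, \cref{alg:parityTemplateZeilonka} is the main obstacle, because it recursively accumulates $\livegroup$, $\colivegroup$, and (implicitly, via $\safegroup = \edges(\wino,\winl)$) the unsafe edges across many recursive calls, so I must ensure conflicts are not created \emph{across} recursion levels. The plan is induction on the recursion: assuming the recursively returned templates are conflict-free on their subgames, I would check the newly added material at each branch. The even branch adds $\reachTemp$ live-groups (\crefrange{step:reachevend}{alg:step:reachhigheven}), for which (ii) follows as in the B\"uchi case; the odd branch adds the co-live edges $\edges(\wino, V\setminus\wino)$ and the $\reachTemp$-toward-$\wino$ live-groups (\crefrange{alg:step:noleavewo}{alg:step:reachwo}). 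The delicate point is a vertex that is both a live-group source (needing a live, non-co-live outgoing edge) and simultaneously subject to a co-live marking from $\edges(\wino, V\setminus\wino)$; I would show these two sets of edges are disjoint at each such vertex, because the live-group edges in $\reachTemp(\gamegraph,\wino)$ point \emph{toward} $\wino$ whereas the co-live edges point \emph{out of} $\wino$, and the progress edge of $\cprez$ always lies in the former. Verifying this disjointness at the recursion boundary, together with confirming that $\safegroup = \edges(\wino,\winl)$ never swallows the progress edges, is the crux of the argument.
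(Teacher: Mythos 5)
Your handling of \cref{alg:buchitemplate} and \cref{alg:coBuchitemplate} is correct and is essentially the paper's own argument: for B\"uchi templates $\colivegroup=\emptyset$ and each live-group $\edges(B,A)$ gives every source a $\cprez{\gamegraph}{A}$-edge into $A$ that stays inside the winning region; for co-B\"uchi templates $\livegroup=\emptyset$ and every co-live marking spares a progress edge towards the current safety region. Your overall plan for \cref{alg:parityTemplateZeilonka} (induction over the recursion, checking the edges added at each level) is also the paper's.

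The genuine gap is the disjointness claim that you yourself identify as the crux of the parity case. It is false that, at a vertex carrying both markings, the live-group edges and the co-live edges are disjoint. The live-groups added in \cref{alg:step:reachwo} have the form $\edges(B_i,A_i)$ where $A_i=\uattr{\gamegraph}{I_{i-1}}\supseteq\wino$, and $A_i$ in general contains vertices \emph{outside} $\wino$; hence a source $v\in\wino\cap V^0$ with an edge to some $w\in A_i\setminus\wino$ has $(v,w)$ both in the live-group and in the co-live set of \cref{alg:step:noleavewo}, since that edge leaves $\wino$. A concrete instance: $\wino=\{x\}$ with $x\in V^0$ carrying a self-loop and an edge to a vertex $y\notin\wino$ whose unique edge returns to $x$; then $y\in\uattr{\gamegraph}{\{x\}}$, so $(x,y)$ is simultaneously a live-group edge and a co-live edge. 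What condition (ii) of \cref{def:implementability} requires --- and what is actually true --- is only that \emph{some} live-group edge at each source avoids $\colivegroup\cup\safegroup$. For $v\in\wino\cap V^0$ the correct witness is not the generic $\cprez{\gamegraph}{A_i}$-edge (which, as above, may leave $\wino$) but $v$'s winning edge \emph{into} $\wino$: it exists because $\wino$ is a $\pz$ winning region of the subgame and hence $\pz$-closed there, it belongs to the live-group because $\wino\subseteq A_i$, and it is not co-live because it does not leave $\wino$. Replacing your disjointness step by this witness argument (and propagating it through your induction, noting that co-live edges produced in deeper recursive calls likewise only leave sub-winning regions and therefore never exhaust a live-group) repairs the proof and recovers the paper's argument.
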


Due to the given conflict-freeness, winning strategies are indeed easy to extract from winning strategy templates, as formalized next.

\begin{proposition}
Given a game graph $\gamegraph = (V,E)$ with \emph{conflict-free} winning strategy template
 $\assump = \set{\assumpsafe(\safegroup), \assumpdep(\colivegroup), \assumpgrlive(\livegroup)}$, a winning strategy $\pi_0$ that follows $\assump$ can be extracted in time $ \bigO(m) $, where $ m $ is the number of edges.
\end{proposition}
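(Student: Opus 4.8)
The plan is to construct the strategy $\strato$ directly by making a single deterministic choice at each $\pz$ vertex, and then argue that this memoryless strategy both follows $\assump$ and is winning. The key enabling fact is conflict-freeness (\REFprop{prop:implementability}): by condition (i) of \REFdef{def:implementability}, every vertex $v\in\vertexz\cap\wino$ has at least one outgoing edge that is neither unsafe nor co-live. The core difficulty is that a single fixed choice generally cannot satisfy a live-group template, since a live-group demands that \emph{some} edge of the group be taken infinitely often whenever a source is visited infinitely often --- and a memoryless strategy visiting $v$ infinitely often would take only one fixed edge. Hence the extracted strategy must cycle through the live-group edges, which requires bounded memory (a counter), not a pure positional strategy.

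First I would set up the data structures in $\bigO(m)$: for each vertex collect its outgoing edges, mark which edges are unsafe ($\safegroup$), which are co-live ($\colivegroup$), and for each live-group $\livegroupSingle\in\livegroup$ record its source vertices and member edges. Condition (i) guarantees the set $\edge(v)\setminus(\colivegroup\cup\safegroup)$ is nonempty at every $v$, so a ``safe default'' edge always exists. The strategy is then: at a source vertex $v$ of one or more live-groups, prefer to advance a round-robin pointer over the group edges emanating from $v$ that are not unsafe (condition (ii) guarantees such edges exist within the group); at all other vertices, and as the default, pick any edge in $\edge(v)\setminus(\colivegroup\cup\safegroup)$. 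I would implement the live-group cycling with a per-group index so that, over repeated visits, every non-unsafe edge of the group gets chosen infinitely often. All of this is a linear pass over the edges, giving the claimed $\bigO(m)$ bound.

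Next I would verify that the extracted $\strato$ indeed follows $\assump=\set{\assumpsafe(\safegroup),\assumpdep(\colivegroup),\assumpgrlive(\livegroup)}$. For the safety part, the default choice never selects an edge in $\safegroup$, so no unsafe edge is ever taken and $\assumpsafe(\safegroup)$ holds. For the co-live part, the default choice never selects an edge in $\colivegroup$; since the strategy is memoryless except for the finitely-valued live-group counters, any $\strato$-play that takes a co-live edge does so only at live-group sources via the round-robin pointer --- but I would ensure the construction never places a co-live edge in the chosen round-robin set (condition (ii) lets us always pick within $\livegroupSingle\setminus(\colivegroup\cup\safegroup)$), so co-live edges are taken zero times, satisfying $\assumpdep(\colivegroup)$. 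For the live-group part, if a source $v$ of $\livegroupSingle$ is visited infinitely often, the round-robin discipline forces some edge of $\livegroupSingle$ to be taken infinitely often, so $\square\lozenge\src(\livegroupSingle)\implies\square\lozenge\livegroupSingle$ holds for each $i$, giving $\assumpgrlive(\livegroup)$.

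Finally, since $\assump$ is a \emph{winning} strategy template for the game (by \REFthm{thm:BuechiCont}, \REFthm{thm:coBuechiCont}, or \REFthm{thm:ParityCont} according to the objective), the definition of winning template means every $\pz$ strategy that follows $\assump$ is winning from every vertex in $\wino$; in particular the extracted $\strato$ is winning. The main obstacle I anticipate is the co-live/live-group interaction in the construction: one must argue that choosing edges within $\livegroupSingle\setminus(\colivegroup\cup\safegroup)$ for the round-robin is always possible and genuinely avoids taking co-live edges while still visiting each such edge infinitely often --- this is exactly where conflict-freeness is indispensable, and I would make sure the two conditions of \REFdef{def:implementability} are invoked precisely at these two points rather than conflated.
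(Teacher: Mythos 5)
Your proposal is correct and follows essentially the same route as the paper: remove unsafe and co-live edges, cycle (with bounded-memory counters) through the remaining edges so that live-groups are satisfied, invoke conditions (i) and (ii) of \REFdef{def:implementability} exactly where you do for well-definedness, and conclude winning-ness from the fact that $\assump$ is a winning template. Your version is only a minor refinement (round-robin restricted to live-group edges at source vertices, fixed default elsewhere, versus the paper's alternation over all remaining edges at every vertex), which makes no substantive difference.
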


The proof is straightforward by constructing the winning strategy as follows.
We first remove all unsafe and co-live edges from $G$ and then construct a strategy $\pi_0$ that alternates between all remaining edges from every vertex in~$\mathcal{W}_0$. This strategy is well defined 
as condition (i) in \cref{def:implementability} ensures that after removing all the unsafe and co-live edges a choice from every vertex remains. Moreover, if the vertex is a source of a live-group edge, condition (ii) in \cref{def:implementability} ensures that  there are outgoing edges satisfying every live-group. It is easy to see that the constructed strategy indeed follows $\assump$ and is hence winning from vertices in $\mathcal{W}_0$, as $\assump$ was a winning strategy template. We call this procedure of strategy extraction $ \extract(\gamegraph, \assump) $.


\section{Applications of Strategy Templates}\label{sec:applications}
This section considers two concrete applications of strategy templates which utilize their structural simplicity and easy adaptability. 

In the context of CPS control design problems, it is well known that the game graph of the resulting parity game used for strategy synthesis typically has a physical interpretation and results from behavioral constraints on the \emph{existing technical system} that is subject to control. In particular,  following the well-established paradigm of abstraction-based control design (ABCD) \cite{tabuada2009verification,alur2015principles,belta2017formal}, an underlying (stochastic) disturbed non-linear dynamical system can be automatically abstracted into a two-player game graph using standard abstraction tools, e.g.\ SCOTS~\cite{SCOTS}, ARCS~\cite{ARCS}, MASCOT~\cite{MASCOT}, P-FACES~\cite{PFACES}, or ROCS~\cite{ROCS}. 


In contrast to classical problems in reactive synthesis, it is very natural in this context to think about the game graph and the specification as two \emph{different} objects. Here, specifications are naturally expressed via propositions that are defined over sets of states of this underlying game graph, without changing its structure. This separation is for example also present in the known LTL fragment GR(1) \cite{piterman2006synthesis}. Arguably, this feature has contributed to the success of GR(1)-based synthesis for CPS applications, e.g.\ \cite{WongpiromsarnTopcuMurray_MPC_LTL,EhlersKressGazit_robotsGR1,alur2013counter,MaozRingert2015,KressGazit2007,KressGazit2009,SVORENOVA2017}.  

Given this insight, it is natural to define the incremental synthesis problem such that the game graph stays unchanged, while newly arriving specifications are modeled as new parity conditions over the same game graph. Formally, this results in a \emph{generalized parity game} where the different objectives arrive \emph{one at a time}. We show an incremental algorithm for synthesizing winning strategies for such games in \cref{sec:composition}. 
Similarly, fault-tolerant control requires the controller to adapt to unavailable actuators within the technical system under control. This naturally translates to the removal of $\pz$ edges within the game graph given its physical interpretation. 
We show how strategy templates can be used to adapt winning strategies to these game graph modifications in \cref{sec:fault}. 

\subsection{Incremental Synthesis via Strategy Templates}\label{sec:composition}
In this section we consider a $2$-player game $\game$ with a conjunction $\spec=\bigwedge_{i=1}^k\spec_i$ of multiple parity objectives $\spec_i$, also called a \emph{generalized} parity objective. However, in comparison to existing work \cite{Buryere2019:partialsolvers,chatterjee2007:generalizedparitygames}, we consider the case that different objectives~$\spec_i$ might not arrive all at the same time. The intuition of our algorithm is to solve each parity game $(G,\spec_i)$ separately and then combine the resulting strategy templates $\assump_i$ to a global template $\assump=\bigwedge_{i=1}^k\assump_i$. This allows to easily incorporate newly arriving objectives $\spec_{k+1}$. We only need to solve the parity game $(G,\spec_{k+1})$ and then combine the resulting template $\assump_{k+1}$ with $\assump$.  

While \cref{prop:implementability} ensures that every individual template $\assump_i$ is \emph{conflict-free}, this does unfortunately not imply that their conjunction is also \emph{conflict-free}. Intuitively, combinations of strategy templates can cause the condition (i) and (ii) in \cref{def:implementability} to not hold anymore, resulting in a \emph{conflict}.
%
As already discussed in \cref{sec:implementability}, this requires source vertices $U\subseteq V$ with such conflicts to eventually not be visited anymore. We therefore resolve such conflicts by adding the specification $\lozenge\square\neg U$ to every objective and recomputing the templates.

To efficiently formalize this objective change, we note that a parity objective $\paritygame(\priority)$ with an additional specification $\lozenge\square \neg U$ for some $U\subseteq V$ is equivalent to another parity objective $Parity(\priority')$, where priority function $\priority'$ can be obtained from $\priority:V\rightarrow[0;2d+1]$ just by modifying the priorities of vertices in $U$ to $2d+1$. Let us denote such a priority function by $\priority[U\rightarrow 2d+1]$. In particular, we have the following result:

\begin{restatable}{lemma}{reductionToSingleParity}\label{lem:compositon}
Given a game graph $\gamegraph$ and two parity objectives $\spec = \paritygame(\priority)$, $\spec' = Parity(\priority')$ such that $\priority:V\rightarrow [0;2d+1]$ and $\priority' = \priority[U\rightarrow 2d+1]$ for some vertex set $U\subseteq V$, it holds that $\lang(\spec') = \lang(\spec \wedge \lozenge\square\neg U)$. Moreover, if a strategy template is winning from some vertex $u$ in the game $\game' = (\gamegraph, \spec')$, then it is also winning from $u$ in the game $\game = (\gamegraph, \spec)$.
\end{restatable}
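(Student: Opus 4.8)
The plan is to prove the two claims in sequence, deriving the template-transfer statement as an easy corollary of the language identity. For the language identity, I would argue by a case split on whether the vertex set $U$ is visited infinitely often along a play $\play$.

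First I would establish the language equivalence $\lang(\spec') = \lang(\spec \wedge \lozenge\square\neg U)$. Fix an infinite play $\play$ and let $\Inf(\play)$ denote the set of vertices occurring infinitely often in $\play$. The key observations are that $2d+1$ is the unique maximal value of both $\priority$ and $\priority'$ and it is odd, while $\priority'$ and $\priority$ agree on every vertex outside $U$. I would then split into two cases. If $\play$ visits $U$ infinitely often, then $\Inf(\play)\cap U \neq \emptyset$, so some vertex of priority $2d+1$ lies in $\Inf(\play)$ under $\priority'$; being the global maximum and odd, it makes $\play$ losing for $\spec'$, hence $\play \notin \lang(\spec')$. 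Simultaneously $\play \not\models \lozenge\square\neg U$, so $\play \notin \lang(\spec\wedge\lozenge\square\neg U)$; thus both sides exclude $\play$. If instead $\play$ visits $U$ only finitely often, then $\Inf(\play)\cap U = \emptyset$, so $\priority$ and $\priority'$ assign the same values to every vertex in $\Inf(\play)$; consequently the maximal priority seen infinitely often agrees under $\priority$ and $\priority'$, giving $\play\in\lang(\spec')$ iff $\play\in\lang(\spec)$. Since in this case $\play\models\lozenge\square\neg U$, this is the same as $\play\in\lang(\spec\wedge\lozenge\square\neg U)$. Combining both cases yields the claimed set equality.

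For the template-transfer statement, I would use the immediate inclusion $\lang(\spec') = \lang(\spec)\cap\lang(\lozenge\square\neg U) \subseteq \lang(\spec)$ obtained from the first part. Suppose $\assump$ is winning from $u$ in $\game'$, and let $\strat$ be any $\pz$ strategy following $\assump$. By definition of a winning template, $\strat$ is winning from $u$ in $\game'$, so every $\strat$-play originating at $u$ lies in $\lang(\spec')\subseteq\lang(\spec)$; hence $\strat$ is winning from $u$ in $\game$. As $\strat$ was an arbitrary strategy following $\assump$, the template $\assump$ is winning from $u$ in $\game$.

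The argument is essentially elementary; the only point requiring care is the case analysis in the first part, specifically the combined use of the facts that $2d+1$ is both the global maximum priority and odd. This is exactly what guarantees that reassigning the vertices of $U$ to priority $2d+1$ penalises precisely the plays that visit $U$ infinitely often, while leaving the winning/losing status of every other play unchanged. I expect no further obstacle.
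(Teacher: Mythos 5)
Your proof is correct and follows essentially the same route as the paper's: both rest on the two observations that reassigning $U$ to the odd maximal priority $2d+1$ forces any play in $\lang(\spec')$ to visit $U$ only finitely often, and that $\priority$ and $\priority'$ agree outside $U$; and your template-transfer argument via $\lang(\spec')\subseteq\lang(\spec)$ is identical to the paper's. The only (cosmetic) difference is that your case split on whether $\Inf(\play)\cap U=\emptyset$ treats both inclusions symmetrically, whereas the paper proves one inclusion explicitly and notes the converse follows similarly.
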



\begin{algorithm}[t]
	\caption{$\compose(\gamegraph , (\wino', \livegroup', \colivegroup', (\spec_i)_{i < \ell}),(\spec_i)_{\ell\leq i\leq k})$ where $\spec_i = \paritygame(\priority_i)$}\label{alg:composition}
	\begin{algorithmic}[1]
		\Require A generalized parity game $\gamegraph = (V,E)$ and  objectives $(\spec_i)_{i\leq k}$ with $\spec_i = \paritygame(\priority_i)$ such that $\priority_i : V \rightarrow [0;2d_i+1]$ along with a partial winning region, live-groups, and co-live edges $(\wino,\livegroup,\colivegroup)$ for the generalized parity game $(\gamegraph, \bigwedge_{i<\ell}\spec_i)$.
		\Ensure A partial winning region $\wino$, live-groups $\livegroup$, co-live edges $\colivegroup$, and modified parity objectives  $(\spec'_i)_{i\leq k}$.
		\State $(W_i, V\setminus W_i), \livegroup_i, \colivegroup_i \gets \parityTemp(\gamegraph|_{\wino},\spec_i)$ for each $\ell \leq i\leq k$\label{alg:composition:parity}
		\State $\livegroup =\livegroup' \cup \bigcup_{\ell\leq i\leq k}\livegroup_i$; $\colivegroup = \colivegroup' \cup \bigcup_{\ell\leq i\leq k}\colivegroup_i$;  $\wino = \wino' \cap \bigcap_{\ell\leq i\leq k}  W_i$
		\State $\conflict_1 = \{u\in \wino \mid u \times (E(u)\cap \wino) \subseteq \colivegroup\}$
		\State $\conflict_2 =\{u\in \wino \mid u \times (\livegroupSingleN(u)\cap \wino) \subseteq \colivegroup, ~\livegroupSingleN \in  \livegroup,~\livegroupSingleN(u) \neq \emptyset\}$
		\If{$\conflict_1 \cup \conflict_2  = \emptyset$}\label{alg:composition:if}
			\State 	\Return $ (\wino, \livegroup, \colivegroup, (\spec_i)_{i\leq k}) $
		\Else
			\State $\priority'_i(u) \gets \priority[\conflict_1\cup \conflict_2 \rightarrow 2d_i'+1]$ for each $i\leq k$ \label{alg:compositon:line8}
			\State 	\Return $\compose(\gamegraph, (\wino, \emptyset, \emptyset, \emptyset), (\spec'_i)_{i\leq k})$ with $\spec'_i = \paritygame(\priority'_i))$
		\EndIf
	\end{algorithmic}
\end{algorithm}

Using the above ideas,  we present \cref{alg:composition} to solve generalized parity games (possibly incrementally).
If no partial solution to the synthesis problem exists so far we have $\ell=0$, otherwise the game $(\gamegraph, \bigwedge_{i<\ell}\spec_i)$ was already solved and the respective winning region and templates are known. 
In both cases, the algorithm starts with computing a winning strategy template for each game $(\gamegraph,\spec_i)$ for $i\in\{\ell+1,k\}$ (line 1) and conjuncts them with the already computed ones (line 2). Then the algorithm checks for conflicts (line 3-4).
If there is some conflict the algorithm modifies the objectives to ensure that the conflicted vertices are eventually not visited anymore (line 8), and then re-computes the templates in the game graph restricted to the intersection of winning regions for all objectives (line 9). 
If there is no conflict,  then the algorithm returns the conjunction of the templates which is conflict-free, and hence, is winning from the intersection of winning regions for every objective (line 6). The latter is formalized in the following theorem. The proof can be found in \cref{app:correctnessOfCompositionality}.

\begin{restatable}{theorem}{correctnessOfCompositionality}\label{thm:compositionCorrectness}
Given a generalized parity game $\game = (\gamegraph, \bigwedge_{i\leq k}\spec_i)$ with $\spec_i = \paritygame(\priority_i)$ and  priority functions $\priority_i: V \rightarrow [0;2d_i+1]$, if
$(\wino, \livegroup, \colivegroup, (\spec'_i)_{i\leq k}) = \compose(\gamegraph, \emptyset, (V, \emptyset, \emptyset), (\spec_i)_{i\leq k})$, then
	$\assump=\{\assumpsafe(\safegroup),\assumpgrlive(\livegroup),$\linebreak $ \assumpdep(\colivegroup)\} $
	is an conflict-free strategy template that is winning from $\wino$ in the game $\game$, where $ \safegroup = \edges(\wino,V\setminus\wino). $
Further, $\assump$ is computable in time $\bigO(kn^{2d+3})$ time, where $n = \abs{V}$ and $d = \max_{i\leq k} d_i$.
\end{restatable}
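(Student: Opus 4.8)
The plan is to prove the statement in three parts: (a) the returned template is conflict-free, (b) it is winning from $\wino$, and (c) the algorithm terminates within the claimed time bound. Parts (a) and (b) are comparatively direct and reduce to the single-objective results, whereas (c) --- and in particular termination --- is where I expect the real work. For (a) I would observe that \cref{alg:composition} returns only at its line~6, and only when $\conflict_1\cup\conflict_2=\emptyset$. Unfolding the definitions of $\conflict_1,\conflict_2$ together with $\safegroup=\edges(\wino,V\setminus\wino)$: every out-edge of a vertex $u\in\wino$ that leaves $\wino$ is unsafe, so ``$u\times(E(u)\cap\wino)\not\subseteq\colivegroup$'' is equivalent to ``$u$ has an out-edge that is neither co-live nor unsafe''. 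Thus $\conflict_1=\emptyset$ is precisely condition~(i) of \cref{def:implementability} on $\wino$, and $\conflict_2=\emptyset$ is precisely condition~(ii); hence the returned $\assump$ is conflict-free.

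For (b) I would use that the returning call computes, in line~1, the per-objective templates $\parityTemp(\gamegraph|_{\wino},\spec'_i)$ for the \emph{modified} objectives $\spec'_i$. By \cref{thm:ParityCont} each $\assump_i=\{\assumpsafe(\cdot),\assumpgrlive(\livegroup_i),\assumpdep(\colivegroup_i)\}$ is winning for $\spec'_i$ from its region $W_i$, and by line~2 we have $\wino=\bigcap_i W_i\subseteq W_i$. A strategy following the global template $\assump=\{\assumpsafe(\safegroup),\assumpgrlive(\livegroup),\assumpdep(\colivegroup)\}$ with $\livegroup=\bigcup_i\livegroup_i$ and $\colivegroup=\bigcup_i\colivegroup_i$ first stays inside $\wino$ (the safety template forbids all of $\safegroup$), and inside $\wino\subseteq W_i$ it respects every $\livegroup_i$ and $\colivegroup_i$ and never takes an edge leaving $W_i$; hence it follows each $\assump_i$ and is winning for $\spec'_i$. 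Therefore it wins $\bigwedge_i\spec'_i$, and iterating the second part of \cref{lem:compositon} (one application per priority-raising round along the recursion, since $\lang(\spec'_i)\subseteq\lang(\spec_i)$) upgrades this to a win for the original $\bigwedge_i\spec_i$. This gives that $\assump$ is winning from $\wino$ in $\game$.

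For (c) I would argue that the candidate region is non-increasing along the recursion --- the recursive call in line~9 recomputes on $\gamegraph|_{\wino}$, so the next region is a subset of the current $\wino$ --- and that it \emph{strictly} shrinks whenever a conflict is detected. Granting this, there are at most $n=\abs{V}$ conflict iterations, since $\abs{\wino}$ drops by at least one each time. Each iteration performs $k$ calls to $\parityTemp$ on graphs with at most $n$ vertices and maximal priority at most $2d+1$ (by \cref{lem:compositon} priorities are only ever set to the already-present maximal odd value, so the range never grows), each costing $\bigO(n^{2d+2})$ by \cref{thm:ParityCont}, plus $\bigO(m)$ to recompute $\conflict_1,\conflict_2$. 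Multiplying the $\bigO(n)$ iterations by the per-iteration cost yields the claimed $\bigO(kn^{2d+3})$.

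The main obstacle is the strict-shrinkage lemma underlying (c): \cref{prop:implementability} guarantees that each individual $\assump_i$ is conflict-free, so a conflict vertex $u$ always retains some non-co-live edge \emph{within each} objective, and it is not a priori clear that $u$ is expelled from $\bigcap_i W_i$. I would isolate this as a separate lemma and prove it by a case analysis on whether $\po$ can force infinitely many returns to $u$ inside $\gamegraph|_{\wino}$: if so, then since $u$ has maximal odd priority in every $\spec'_i$ it is losing in each, so $u\notin W_i$ and the region shrinks; otherwise $\pz$ can keep $u$ visited only finitely often, and the delicate direction is to show the conflict is then genuinely resolved. Here I would exploit that $u\in\conflict_1\cup\conflict_2$ means \emph{all} of its within-$\wino$ moves are co-live, so every non-co-live ``escape'' edge of $u$ must leave the common region $\wino$ (and is hence unsafe, not merely leaving some single $W_i$); forcing $u$ to be visited finitely often then leaves it no safe winning behaviour inside $\wino$, expelling it from the intersection and guaranteeing progress.
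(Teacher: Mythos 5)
Your parts (a) and (b) coincide with the paper's own argument: the paper also obtains conflict-freeness from the fact that \cref{alg:composition} only returns when $\conflict_1\cup\conflict_2=\emptyset$, which is exactly conditions (i)--(ii) of \cref{def:implementability}, and it also obtains winning by conjoining the per-objective templates from \cref{thm:ParityCont} and transferring from the modified objectives $\spec'_i$ back to $\spec_i$ via \cref{lem:compositon} (the paper merely packages this inside an induction along the recursion). The genuine gap is in your part (c): the strict-shrinkage lemma you isolate is \emph{false}, and so is the argument you sketch for it. A vertex $u\in\conflict_2$ only has its \emph{live-group} edges inside $\wino$ marked co-live; its other outgoing edges are completely unconstrained, so your claim that every non-co-live ``escape'' edge of $u$ must leave $\wino$ conflates $\conflict_1$ with $\conflict_2$. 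Consequently, after $u$'s priority is raised to the maximal odd value, $u$ can remain in every $W_i$ --- winning plays simply visit $u$ finitely often --- and $\wino$ need not lose a single vertex in a conflict round.

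Here is a concrete counterexample: take Player-0 vertices $u,a_1,a_2,c$ with edges $u\to a_1$, $u\to c$, $a_1\to u$, $c\to c$, $c\to a_2$, $a_2\to c$, and objectives $\spec_1=\square\lozenge\{a_1,a_2\}$ and $\spec_2=\lozenge\square\neg\{a_1\}$ (written as parity objectives). Then $\parityTemp$ yields the live-group $\{e_{ua_1},e_{ca_2}\}$ for $\spec_1$ and the co-live edge $e_{ua_1}$ for $\spec_2$, so $u\in\conflict_2$; yet after raising $u$'s priority, both modified games are still won from \emph{every} vertex (move to $c$ and loop through $a_2$), so the recomputed region is again all of $V$, the conflict simply disappears in the recomputed templates, and $u$ is never expelled. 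Hence the number of conflict rounds cannot be bounded by decreases of $\abs{\wino}$ alone, and your bound of $n$ iterations has no proof. The paper's proof avoids this by inducting on the lexicographic measure $(\abs{\wino'},\abs{\wino'\setminus P_{1,2d_1+1}})$, where $P_{1,2d_1+1}$ is the set of vertices already at maximal odd priority under $\priority_1$: every conflict round either shrinks the region, or keeps it unchanged and moves at least one region vertex into $P_{1,2d_1+1}$ (conflict vertices lie in $\wino$ and have their priority raised), which bounds the number of rounds by $n^2$; combined with the $\bigO(n^{2d_i+1})$ cost per $\parityTemp$ call this still yields the claimed $\bigO(kn^{2d+3})$. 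To salvage your route you would have to replace strict shrinkage by exactly such a two-component progress measure.
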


Due to the conflict checks carried out within \cref{alg:composition} the returned modified objectives $\spec'_i$ ensure that the \emph{conjunction} $\assump:=\bigwedge_{i=1}^k\assump'_i$ of winning strategy templates $\assump'_i$ for the games $(\gamegraph,\spec'_i)$ is indeed conflict-free. In particular, the conjuncted template $\assump$ is actually returned by the algorithm. Hence, incrementally running \cref{alg:composition} is actually sound. This is an immediate consequence of \cref{thm:compositionCorrectness} and stated as a corollary next.
%
\begin{corollary}
Given a generalized parity game $\game = (\gamegraph, \bigwedge_{i\leq k}\spec_i)$ with $\spec_i = \paritygame(\priority_i)$ and  priority functions $\priority_i: V \rightarrow [0;2d_i+1]$, s.t.\
\begin{align*}
(\wino', \livegroup', \colivegroup', (\spec'_i)_{i<\ell}) &:= \compose(\gamegraph, (V, \emptyset, \emptyset, \emptyset), (\spec_i)_{i<\ell}), \text{ and}\\
(\wino, \livegroup, \colivegroup, (\spec''_i)_{i\leq k}) &:= \compose(\gamegraph, (\wino', \livegroup', \colivegroup', (\spec'_i)_{i<\ell}), (\spec_i)_{\ell \leq i\leq k})
\end{align*}
then
	$\assump=\set{\assumpsafe(\safegroup), \assumpgrlive(\livegroup), \assumpdep(\colivegroup)} $
	is an conflict-free strategy template that is winning from $\wino$ in the game $\game$, where $ \safegroup = \edges(\wino,V\setminus\wino). $
Further, $\assump$ is computable in time $\bigO(kn^{2d+3})$, where $n = \abs{V}$ and $d = \max_{i\leq k} d_i$.
\end{corollary}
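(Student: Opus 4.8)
The plan is to reduce the corollary to \cref{thm:compositionCorrectness} by showing that the two-stage incremental invocation of \compose computes exactly the same output as the single batch invocation $\compose(\gamegraph, \emptyset, (V,\emptyset,\emptyset),(\spec_i)_{i\leq k})$ covered by the theorem. First I would observe that the first call $\compose(\gamegraph,(V,\emptyset,\emptyset,\emptyset),(\spec_i)_{i<\ell})$ produces a partial winning region $\wino'$, live-groups $\livegroup'$, co-live edges $\colivegroup'$, and possibly-modified objectives $(\spec'_i)_{i<\ell}$ for the subgame restricted to the first $\ell-1$ objectives. By \REFlem{lem:compositon}, whenever \compose modifies a priority function via $\priority[\conflict\rightarrow 2d_i'+1]$, the resulting objective $\spec'_i$ satisfies $\lang(\spec'_i)=\lang(\spec_i\wedge\lozenge\square\neg\conflict)$, and any template winning for the modified game is also winning for the original. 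Hence the partial data $(\wino',\livegroup',\colivegroup')$ returned by the first call is a genuine conflict-free winning template for $(\gamegraph,\bigwedge_{i<\ell}\spec_i)$ from $\wino'$.

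Next I would feed this partial solution as the starting point of the second call $\compose(\gamegraph,(\wino',\livegroup',\colivegroup',(\spec'_i)_{i<\ell}),(\spec_i)_{\ell\leq i\leq k})$. The key structural point is that \cref{alg:composition} is explicitly designed to accept such a partial solution: line~1 only recomputes parity templates for the \emph{new} objectives $\spec_i$ with $\ell\leq i\leq k$ on the restricted graph $\gamegraph|_{\wino'}$, while line~2 unions the freshly computed live-groups and co-live edges with the carried-over $\livegroup'$ and $\colivegroup'$ and intersects the winning regions. The conflict detection on lines~3--4 and the conflict-resolution recursion on lines~8--9 then operate uniformly on the combined template. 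I would argue, by the correctness already established in \cref{thm:compositionCorrectness} applied to the second call with its given partial input, that the returned $(\wino,\livegroup,\colivegroup,(\spec''_i)_{i\leq k})$ yields a conflict-free template $\assump=\set{\assumpsafe(\safegroup),\assumpgrlive(\livegroup),\assumpdep(\colivegroup)}$ with $\safegroup=\edges(\wino,V\setminus\wino)$ that is winning from $\wino$ for the combined modified objective, and via \REFlem{lem:compositon} winning from $\wino$ for the original $\bigwedge_{i\leq k}\spec_i$.

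The main obstacle I anticipate is justifying that the incremental decomposition does not lose winning vertices or soundness relative to the batch computation --- i.e.\ that restricting to $\wino'$ before processing the later objectives is harmless. Here I would lean on the monotonicity of the winning region intersection: since the final winning region for $\bigwedge_{i\leq k}\spec_i$ must be contained in the winning region for $\bigwedge_{i<\ell}\spec_i$, computing templates for the new objectives on $\gamegraph|_{\wino'}$ rather than on all of $\gamegraph$ cannot discard any vertex that would survive in the batch computation, because any vertex outside $\wino'$ is already losing for the prefix conjunction and hence losing for the full conjunction. Soundness of the templates on the restricted graph follows because \parityTemp and the conflict machinery only ever impose local edge constraints that remain valid when embedded back into $\gamegraph$; this is exactly the content encapsulated by \cref{thm:compositionCorrectness} for the second call.

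Finally, the complexity bound $\bigO(kn^{2d+3})$ transfers immediately: each of the two \compose calls is itself bounded by $\bigO(kn^{2d+3})$ from \cref{thm:compositionCorrectness} (the second call processing at most $k$ objectives on a graph of at most $n$ vertices), and the sum of two such terms is again $\bigO(kn^{2d+3})$. I would therefore present the corollary as a direct instantiation of the theorem together with the observation that the algorithm's partial-input interface makes the incremental and batch computations coincide, so that no genuinely new argument beyond \cref{thm:compositionCorrectness} and \REFlem{lem:compositon} is required.
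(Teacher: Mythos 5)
Your central reduction is exactly the paper's: the corollary is an immediate consequence of \cref{thm:compositionCorrectness} together with \cref{lem:compositon}, because the second call of $\compose$ unions the carried-over live-groups and co-live edges with the freshly computed ones (line~2 of \cref{alg:composition}), so the final conflict check certifies conflict-freeness of the \emph{whole} conjunction, and winning-ness of the modified objectives transfers back to the original objectives via \cref{lem:compositon}. Two caveats, neither of which is load-bearing for your argument but both of which are stated as if they were facts. First, your opening claim that the two-stage invocation \enquote{computes exactly the same output as the single batch invocation} is unnecessary and false in general: which conflicts are detected, and hence which vertices get their priorities lifted to $2d_i+1$, depends on which objectives are present when the check is performed, so the incremental and batch runs can return different winning regions and different templates. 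The corollary asserts only soundness of the incremental run (the returned template is conflict-free and winning from the returned $\wino$), not agreement with the batch run. Second, your \enquote{monotonicity} argument that restricting to $\wino'$ \enquote{cannot discard any vertex that would survive in the batch computation} is a completeness claim the paper deliberately avoids: \cref{alg:composition} is not complete, and the $\wino'$ returned by the first call may already be a strict subset of the winning region of the prefix game $(\gamegraph,\bigwedge_{i<\ell}\spec_i)$. What actually justifies applying \cref{thm:compositionCorrectness} to the second call with a nontrivial partial input is that the theorem's inductive proof is carried out over an arbitrary input region $\wino'$, which is precisely why the paper can call the corollary immediate; your final paragraph says this correctly, and the complexity bookkeeping (two calls, each $\bigO(kn^{2d+3})$) is fine.
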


We note that the generalized Zielonka algorithm~\cite{chatterjee2007:generalizedparitygames} for solving generalized parity games has time complexity $\bigO(mn^{\sum 2d_i})  {\sum{d_i} \choose d_1,d_2,\ldots,d_k}$ for a game with $n$ vertices, $m$ edges and $k$ priority functions: $\priority_i$ with $2d_i$ priorities for each $i$. Clearly, \cref{alg:composition} has a much better time complexity. However, it is not complete, i.e, it does not always return the complete winning region. 
This is due to templates being not maximally permissive and hence potentially raising conflicts which result in additional specifications that are not actually required. 
The next example shows such an incomplete instance for illustration. We however note that \cref{alg:composition} returned the \emph{full winning region} on \emph{all} benchmarks considered during evaluation, suggesting that such instances rarely occur in practice.

\begin{example}
 Consider the game in \cref{fig:introExample} with objectives $\spec_3 \wedge \spec_4$ with $\spec_4 = \paritygame(\priority)$, where $\priority$ maps vertices $a,b,c,d,e,f$ to $0,2,1,1,1,1$, respectively. 
 The winning strategy templates computed by $\parityTemp$ for objectives $\spec_3$ and $\spec_4$ are $\assump_3 = \assumpdep(e_{ab},e_{db},e_{de})$ and $\assump_4 = \assumpgrlive(\{e_{ab},e_{db},e_{de}\})$, respectively. 
The conjunction of both templates marks all outgoing edges of vertex $a$ and $d$ in the live-group co-live. Hence, the algorithm would ensure that these conflicted vertices $a$ and $d$ are eventually not visited anymore. However, the only way to satisfy $\spec_3\wedge\spec_4$ is by eventually looping on vertex~$a$. But this solution was skipped by the strategy template $\assump_4$ by putting edge $e_{ab}$ in a live-group. 
Therefore, the algorithm resturns the empty set as the winning region, whereas the actual winning region is the whole vertex set.
\end{example}



\newcommand{\fault}{\ensuremath{\mathcal{F}}}
\subsection{Fault-Tolerant Strategy Adaptation}\label{sec:fault}
 
 \begin{algorithm}[b]
	\caption{$ \PUFcorrection(\gamegraph,\assump, \faulty) $}
	\label{alg:PUFalgo}
	\begin{algorithmic}[1]
		\Require A parity game $\game = (\gamegraph, \paritygame(\priority))$, a strategy template $ \assump$, and a set of faulty edges $ \faulty $
		\Ensure A new strategy template $ \assump' $
		\State $ \assump' \gets \set{\assump,\assumpsafe(\faulty)}$
		\If {\checkconflictfree($\gamegraph$,$\assump'$)} 
		\Return $ \assump'$
		\Else
		\State \Return $\parityTemp(\gamegraph|_{E\backslash \faulty}, \priority|_{E\backslash \faulty}) $
		\EndIf
	\end{algorithmic}
\end{algorithm}

In this section we consider a $2$-player parity game $ \game=(\gamegraph,\paritygame(\priority)) $ and a set of faulty $\pz$ edges $ \faulty\subseteq E\cap (V^0\times V) $ which might become unavailable during runtime. Given a strategy template $\assump$ for $\game$, we can use $\assump' = \set{\assump, \assumpsafe(\faulty)}$ for the (linear-time) extraction of a new strategy for the game, if $\assump'$ is conflict-free for $\gamegraph$. In this case, no re-computation is needed. If $\assump'$ is not conflict-free for $\gamegraph$, then we can remove the edges in $\faulty $ and compute a new winning strategy template using \cref{alg:parityTemplateZeilonka}. This is formalized in \cref{alg:PUFalgo}, where we slightly abuse notation and assume that $ \parityTemp $ only outputs strategy templates. The correctness of \cref{alg:PUFalgo} follows directly from \cref{thm:ParityCont}. 
\begin{corollary}
 Given a $2$-player parity game $ \game=(\gamegraph,\paritygame(\priority)) $ with a strategy template $\assump= \parityTemp(\gamegraph, \priority) $ 
 and faulty edge set $ \faulty\subseteq E\cap (V^0\times V) $ it holds that $\assump'$ obtained from \cref{alg:PUFalgo} is a winning strategy template for $ \game|_{E\setminus \faulty} $.
\end{corollary}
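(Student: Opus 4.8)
The plan is to follow the two-way branching of \cref{alg:PUFalgo} and handle the two possible outputs separately, since $\assump'$ is either $\set{\assump,\assumpsafe(\faulty)}$ (when $\checkconflictfree(\gamegraph,\assump')$ succeeds) or $\parityTemp(\gamegraph|_{E\setminus\faulty},\priority|_{E\setminus\faulty})$ (otherwise). The second case is immediate: $\game|_{E\setminus\faulty}$ is itself a parity game with the same priority function, so \cref{thm:ParityCont} applies verbatim and certifies that the recomputed template is winning for $\game|_{E\setminus\faulty}$. The only hypothesis I would double-check here is the standing assumption that every vertex has an outgoing edge; if deleting $\faulty$ strands some $\pz$ vertex with no successor, that vertex is trivially losing and drops out of the winning region, so the conclusion is unaffected.

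The interesting case is when the template is returned unchanged as $\assump'=\set{\assump,\assumpsafe(\faulty)}$ without recomputation. First I would record the semantic observation that a $\pz$ strategy $\strat$ follows $\assump'$ if and only if it follows $\assump$ and never traverses an edge of $\faulty$, since the extra conjunct is exactly $\assumpsafe(\faulty)=\square\bigwedge_{e\in\faulty}\neg e$. Hence every such $\strat$ is a legitimate strategy of the restricted game $\game|_{E\setminus\faulty}$, and each of its plays is simultaneously a play of the original game $\game$ that satisfies $\assump$. Because $\assump$ is a winning strategy template for $\game$ by \cref{thm:ParityCont}, each such play starting in the winning region $\wino$ of $\game$ satisfies $\paritygame(\priority)$; and as the parity objective depends only on the priorities visited --- identical in $\game$ and $\game|_{E\setminus\faulty}$ --- these plays are winning in $\game|_{E\setminus\faulty}$ too. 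This shows that every strategy following $\assump'$ is winning from all of $\wino$ in $\game|_{E\setminus\faulty}$.

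It then remains to reconcile this with the definition of a winning strategy template, which asks for winning from every vertex of the winning region of $\game|_{E\setminus\faulty}$. For this I would invoke the monotonicity fact that deleting $\pz$ edges can only shrink $\pz$'s winning region, so the winning region of $\game|_{E\setminus\faulty}$ is contained in $\wino$, and winning from all of $\wino$ a fortiori gives winning from the smaller region. Finally, conflict-freeness of $\assump'$ on $\gamegraph$ --- where the faulty edges now count as unsafe --- is precisely conditions (i)--(ii) of \cref{def:implementability} required by the strategy-extraction result of \cref{sec:implementability}, which guarantees that at least one strategy following $\assump'$ exists, so the template is non-vacuous. I expect the main obstacle to be this Case-1 bookkeeping: making precise that ``follows $\assump'$'' forces plays into the non-faulty edge set so that a single play lives in both games at once, and correctly tracking the winning region (it may shrink but never grows, which is exactly what makes re-using $\wino$ sound); the remaining complexity and termination claims inherit directly from \cref{thm:ParityCont}.
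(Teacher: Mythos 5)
Your proposal is correct and follows essentially the same route as the paper, which disposes of the corollary in one line by invoking \cref{thm:ParityCont}: the recomputation branch is \cref{thm:ParityCont} applied to $\game|_{E\setminus\faulty}$, and the no-recomputation branch is \cref{thm:ParityCont} for $\game$ plus the observation that strategies following $\set{\assump,\assumpsafe(\faulty)}$ never use faulty edges and that the restricted game's winning region is contained in $\wino$. Your Case-1 bookkeeping (the bridging of strategies/plays between the two games and the monotonicity of the winning region) is exactly the detail the paper leaves implicit, so the two arguments coincide in substance.
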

Faulty edges introduce an additional safety specification for which our templates are maximally permissive. This implies that \cref{alg:PUFalgo} is \emph{sound and complete} -- if there exists a winning strategy for $(G|_{E\setminus F},\paritygame(\priority))$ \cref{alg:PUFalgo} finds one. 

Let us now assume that $\faulty$ collects all edges controlling \emph{vulnerable} actuators that \emph{might} become unavailable. In this scenario, \cref{alg:PUFalgo} returns a conservative strategy that \emph{never} uses vulnerable actuators. It might however be desirable to use actuators as long as they are available to obtain better performance.
Formally, this application scenario can be defined via a time-dependent graph who's edges change over time, i.e., $ E_t $ with $ E_0=E $ are the edges available at time $t\in\mathbb{N}$ and $ \faulty:=\{e\in E\mid e\not \in E_i, \text{ for some } i\} $.
Given the original parity game $ \game=(\gamegraph,\paritygame(\priority)) $ with a winning strategy template $\assump$ we can easily modify $\extract(\game,\assump)$ to obtain a time-dependent strategy $ \strat_g $ which reacts to the unavailability of edges, i.e., at time $ t $, $ \strat_g $ takes an edge $ e\in E_t\backslash(\safegroup\cup\colivegroup) $ for all vertices without any live-group, and for the ones with live-groups, it alternates between the edges satisfying the live-groups whenever they are available, and an edge $ e\in E_t\backslash(\safegroup\cup\colivegroup) $ when no live-group edge is available. 

The online strategy $\strat_g$ can be implemented even without knowing when edges are available\footnote{We note that it is reasonable to assume that current actuator faults are visible to the controller at runtime, see e.g.\ \cite{REIJNEN2021103473} for a real water gate control example.}, i.e., without knowing the time dependent edge sequence $\set{E_t}_{t\in\mathbb{N}}$ up front. In this case $ \strat_g $ is obviously winning in  $ \game=(\gamegraph,\paritygame(\priority)) $ if $\assump$ is conflict-free for $\game|_{E\setminus F}$. 
If this is not the case, one needs to ensure that edges that cause conflicts are always eventually available again, as formalized next.
%

\begin{definition}
Given a parity game $ \game=(\gamegraph,\paritygame(\priority)) $ we call the dynamic edge set $\set{E_i}_{i\geq 0}$ a \emph{guaranteed availability fault (GAF)} if $ \forall $ plays $ \play=v_0v_1\ldots $, $ \forall v\in V $, if $ v\in inf(\play) $, then $ \forall e=(v,w)\in \faulty $, $\exists $ infinitely many times $ t_0,t_1\ldots $ such that $ v_{t_j}=v $ and $ e\in E_{t_j} $, $ \forall j\geq 0 $.
\end{definition}

Intuitively, guaranteed availability faults (GAF) ensure that a faulty edge is always eventually available when a play is in its source vertex. Under this fault, the following fault-correction result holds, which is proven in \cref{sec:gafcorrection}.

\begin{restatable}{proposition}{GAFcorrection}\label{prop:gafcorrection}
	Given a game graph $\gamegraph $ with a parity objective $ \spec $, a strategy template $\assump= \set{\assumpsafe(\safegroup), \assumpgrlive(\livegroup), \assumpdep(\colivegroup)} $ computed by \cref{alg:parityTemplateZeilonka} and a set $ \faulty = \{e\in E\mid e\not \in E_i, \text{ for some } i\} $ of faulty edges, the game with the objective is realizable under GAF if for every vertex $ v\in V^0 $, there is an outgoing edge which is not in $ \safegroup\cup\colivegroup\cup \faulty $.
\end{restatable}

This proposition allows a simple linear-time algorithm to check if the templates computed by \cref{alg:parityTemplateZeilonka} are GAF-tolerant: check if every vertex in the winning region has an outgoing edge which is not in $ \safegroup\cup\colivegroup\cup \faulty $. If this is not the case, the recomputation is non-trivial and is out of scope of this paper. We can however collect the vertices which do not satisfy the above property and alert the system engineer that these vulnerable actuators require additional maintenance or protective hardware.
%
Our experimental results in \cref{sec:experiments} show that conflicts arising from actuator faults are rare and very local. Our strategy templates allow to easily localize them, which supports their use for CPS applications.

\section{Empirical Evaluation}\label{sec:experiments}

We have developed a C++-based prototype tool~\toolname\footnote{Repository URL: \url{https://github.com/satya2009rta/pestel}} (computing \textbf{\textsc{Pe}}rmissive \textbf{\textsc{S}}trategy \textbf{\textsc{Te}}mp\textbf{\textsc{l}}ates) that implements algorithms \ref{alg:buchitemplate} -- \ref{alg:PUFalgo}. We have used \toolname to show its superior performance on the two applications considered in \cref{sec:applications}, suggesting its practical relevance. All our experiments were performed on a computer equipped with Apple M1 Pro 8-core CPU and 16GB RAM.

\smallskip
\noindent\textbf{Incremental Synthesis.} We used \toolname to solve generalized parity games both in one shot and incremental. 
We compare our algorithm with existing algorithms, i.e., \genziel from \cite{chatterjee2007:generalizedparitygames} and three partial solvers\footnote{While \genziel is sound and complete \cite{chatterjee2007:generalizedparitygames}, we found different randomly generated games where the algorithms from \cite{Buryere2019:partialsolvers} either return a superset or a subset of the winning region, hence compromising soundness and completeness. Since \cite{Buryere2019:partialsolvers} lacks rigorous proof, it is not clear whether this is an implementation bug or a theoretical mishap, leaving soundness and completeness guarantees of these algorithms open.} from \cite{Buryere2019:partialsolvers}, by executing them on a large set of benchmarks.
We have generated two types of benchmarks from the games used for the Reactive Synthesis Competition (SYNTCOMP)~\cite{benchmark:syntcomp}. 
Benchmark A was generated by converting parity games into Streett games using standard methods, and as each Streett pair can be represented by a $\{0,1,2\}$-priority parity game, we represented the complete Streett objective as a conjunction of multiple $\{0,1,2\}$-priority parity objectives, resulting in a generalized parity game.
Benchmark B was generated by adding randomly\footnote{The random generator takes three parameters: game graph “G”, number of objectives "k”, and maximum priority “m”; and then it generates “k” random parity objectives with maximum priority “m” as follows: 50\% of the vertices in “G” are selected randomly, and those vertices are assigned priorities ranging from 0 to “m” (including 0 and m) such that 1/m-th (of those 50\%) vertices are assigned priority 0 and 1/m-th are assigned priority 1 and so on. The rest 50\% are assigned random priorities ranging from 0 to “m”. Hence, for every priority, there are at least 1/(2m)-th vertices (i.e., 1/m-th of 50\% vertices) with that priority.} generated parity objectives to given parity games. 
We considered $200$ examples in Benchmark A and more than $1400$ examples in Benchmark B.

\begin{table}[t]
 \scriptsize
 \centering
 \begin{tabular}{|c|c|c|c|c|c|c|}
 \hline
                      &             & \toolname & \genziel \cite{chatterjee2007:generalizedparitygames} & \begin{tabular}[c]{@{}c@{}}\genziel\&\\ Gen\buchi\cite{Buryere2019:partialsolvers}\end{tabular} & \begin{tabular}[c]{@{}c@{}}\genziel\&\\ GenGoodEp\cite{Buryere2019:partialsolvers}\end{tabular} & \begin{tabular}[c]{@{}c@{}}\genziel\&\\ GenLay\cite{Buryere2019:partialsolvers}\end{tabular} \\ \hline
 \multirow{4}{*}{\shortstack{Benchmark A\\ (one shot)}}   & mean time   & 343 & 64 & 68 & 553  & 1224  \\ \cline{2-7} 
                      								   & incomplete  & 0     & -    & 3    & 3      & 2         \\ \cline{2-7} 
                                                       &{\cellcolor{lightgray!50} faster than } & {\cellcolor{lightgray!50}-}   & {\cellcolor{lightgray!50}74\% }   & {\cellcolor{lightgray!50}75\% }  & {\cellcolor{lightgray!50}96\%}   & {\cellcolor{lightgray!50}85\%}    \\ \cline{2-7} 
                      								   & timeouts    & 0  & 0  & 0   & 2  & 20 \\ \hline
 \multirow{4}{*}{\shortstack{Benchmark B\\ (one shot)}} & mean time   & 60 & 47  & 58   & 112   & 171 \\ \cline{2-7} 
                      									& incomplete  & 0 & -  & 28  & 27   & 2   \\ \cline{2-7} 
                      									&{\cellcolor{lightgray!50}faster than} & {\cellcolor{lightgray!50}-}  & {\cellcolor{lightgray!50}93\% } & {\cellcolor{lightgray!50}93\% } & {\cellcolor{lightgray!50}97\%}  & {\cellcolor{lightgray!50}95\%}  \\ \cline{2-7} 
                      									& timeouts    & 1  & 0   & 2   & 4  & 18   \\ \hline
                      									 Overall										   &{\cellcolor{lightgray!50} faster than}  & {\cellcolor{lightgray!50}-}   & {\cellcolor{lightgray!50} 90\%}    & {\cellcolor{lightgray!50}90\% }  & {\cellcolor{lightgray!50}97\%}   & {\cellcolor{lightgray!50}94\% }     \\ \hline \hline 
                      									 
 \multirow{4}{*}{\shortstack{Benchmark B\\(incremental)}} & mean time   &91  & 208  & 215  & 338  & 394   \\ \cline{2-7} 
                      									 & incomplete  & 0 & -   & 24  & 23   & 2   \\ \cline{2-7} 
                      									 &{\cellcolor{lightgray!50} faster than} & {\cellcolor{lightgray!50}-}  & {\cellcolor{lightgray!50}97\%}   & {\cellcolor{lightgray!50}97\%}    & {\cellcolor{lightgray!50}98\%}   & {\cellcolor{lightgray!50}99\% } \\ \cline{2-7} 
                      									 & timeouts    & 2   & 0   & 0  & 8   & 23 \\ \hline\hline
 \end{tabular}
\vspace{0.2cm}
 \caption{Aggregated experimental results on generalize parity game benchmarks with objectives given \emph{up-front} (top) and \emph{one-by-one} (bottom). Subrows: 1st row (mean time) -- average computation time (in ms); 2nd row (incomplete) -- number of examples where the corresponding tool failed to compute the complete winning region; 3rd row (faster than) -- number of examples where \toolname is faster than the respective tool; 4th row (timeouts) -- number of examples where the respective tool timed out (10000 ms).}\vspace{-0.8cm}
  \label{table:experiment}
 \end{table}

We summarize the complete set of results of the experiments in\footnote{See \cref{app:experiments} for a version of \cref{fig:intro-experiement-sensitivity} including all solvers considered in \cref{table:experiment}.} \cref{table:experiment} and \cref{fig:intro-experiement-sensitivity}.
We performed two kinds of experiments. First, we solved every generalized parity game in Benchmark A and B in \emph{one shot} using the different methods. The results are shown in \cref{table:experiment} (top) and \cref{fig:intro-experiement-sensitivity} (left). Although the average time taken by \toolname is higher than \genziel and one partial solver, it is fastest in more than 90\% of the games in both benchmarks. Thus, it shows that \toolname is as efficient as the other methods in most cases. Moreover, for every game in both benchmarks, \toolname succeeded to compute the complete winning region, whereas the partial solvers failed to do so in some cases\footnote{Additionally, we outperform all algorithms on the benchmarks considered by Bruy\`ere et al.~\cite{Buryere2019:partialsolvers}. We have however chosen to not include them in our analysis as many of their generalized parity games have only one objective and are therefore trivial.}.. We note that the instances which are hard for \toolname are those where the winning region becomes empty, which is quickly detected by \genziel but only seen by \toolname after most objectives are (separately) considered.

Second, we solved the examples in Benchmark B by adding the objectives \emph{one-by-one}, i.e., we solved the game with one objective, then we added one more objective and solved it again, and so on.  The results are shown in \cref{table:experiment} (bottom) and \cref{fig:intro-experiement-sensitivity} (right).
As \toolname can use the pre-computed strategy templates if we add a new objective to a game, it outperforms all the other solvers significantly as they need to re-solve the game from scratch every time.  

\smallskip
\noindent\textbf{Fault-tolerant Control.} 
As discussed in \cref{sec:fault}, strategy templates can be used to implement a fault tolerant time-dependent strategy, if the set of faulty edges $F$ does not cause conflicts with the strategy template. We have used \toolname on over $200$ examples of parity games from SYNTCOMP~\cite{benchmark:syntcomp} to evaluate the relevance of such conflicts in practice. For this, we randomly selected different percentages of edges to be faulty and checked for conflicts with the given template. The results are summarized in \cref{fig:experiment-fault}.
The left plot shows the number of instances for which a conflict occurs if a certain percentage of randomly selected edges is faulty. We see that the majority of the instances never faces a conflict even when $30\%$ of the edges are faulty. Looking more closely into the instances with conflicts, \cref{fig:experiment-fault} (right) shows the average number of conflicting vertices in these benchmarks. Here we see that conflicts occur very locally at a very small number of vertices. Our strategy templates allow for a linear-time algorithm to localize them, allowing to mitigate them in practice by additional hardware.


 \begin{figure}[t]
 \begin{center}
     \includegraphics[scale=0.28]{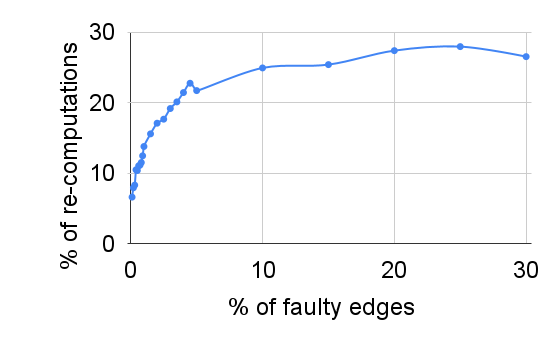}
     \includegraphics[scale=0.28]{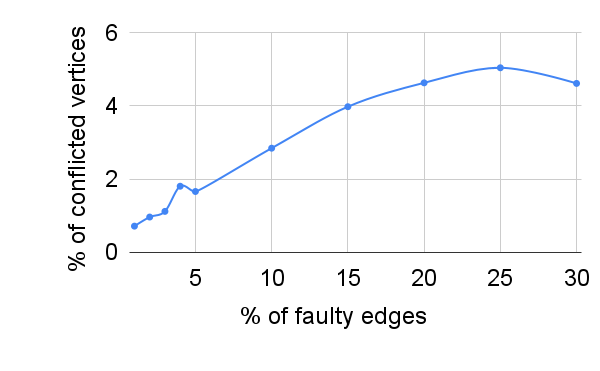}
     \end{center}
     \vspace{-0.9cm}
      \caption{Experimental results for parity games with faulty edges. Left: percentage of instances with conflicts given a certain percentage of faulty edges. Right: average percentage of vertices that created conflicts given a certain percentage of faulty edges.  }
      \label{fig:experiment-fault}
      \vspace{-0.5cm}
  \end{figure}

  \begin{remark}
   We remark again that our results are directly applicable to CPS with continuous dynamics via the paradigm of abstraction-based control design (ABCD).  In particular, standard abstraction tools such as SCOTS~\cite{SCOTS}, ARCS~\cite{ARCS}, MASCOT~\cite{MASCOT}, P-FACES~\cite{MASCOT}, or ROCS~\cite{ROCS} automatically compute a game graph from the (stochastic) continuous dynamics that can directly be used as an input to \toolname. The winning strategy computed by \toolname can further be refined into a correct-by-construction continuous feedback controller for the original dynamical system using standard methods from ABCD. We leave these tool integrations to future work.
  \end{remark}

%
%
%
\bibliographystyle{splncs04}
\newpage
\bibliography{main}

\newpage
\appendix

 \section{Winning Strategy Templates}\label{app:templates}

\subsection{Strategy Templates for \buchi Games}\label{app:Buechi}

Here, we restate \cref{thm:BuechiCont}, and formally prove the same. 
\restatebuechi*

\begin{proof}
	Before proceeding with the proof of \cref{thm:BuechiCont} we show that \cref{alg:buchitemplate} terminates.
\begin{lemma}\label{lemma:buchi algo terminates}
	The \cref{alg:buchitemplate} terminates in time $ \bigO(nm) $, where $ n $ and $ m $ are as above.
\end{lemma}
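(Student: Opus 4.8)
The plan is to bound separately the cost of the steps outside the loop and then to establish that the \texttt{while}-loop inside $\reachTemp$ terminates after at most $n$ iterations, each costing $\bigO(m)$. For the non-loop part, $\solveBuchi(\gamegraph, I)$ computes the \buchi winning region in $\bigO(nm)$ by the standard algorithm of \cite{chatterjee2008:algorithmsforbuechigames}, $\safeTemp(\gamegraph, \wino)$ runs in $\bigO(m)$ as already noted, and restricting the graph to $\wino$ together with intersecting $I$ with $\wino$ costs $\bigO(n+m)$. Hence everything except $\reachTemp$ already fits within $\bigO(nm)$.

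The crux is the termination and cost of $\reachTemp(\gamegraph, I)$ on the restricted graph. First I would record the structural fact that, after line~\ref{step:restrict to buchi winning region}, the game graph equals its own \buchi winning region; consequently $\pz$ can force a visit to $I$ from every vertex, i.e.\ $\attrz{\gamegraph}{I} = V$. (Player~$1$ cannot leave $\wino$, so the restricted graph has no dead ends and the controllable attractor of the goal set is the whole vertex set.) I would then prove the key claim: in any iteration with $I \neq V$, the updated set $A \cup B$ strictly contains $I$, where $A = \uattr{\gamegraph}{I}$ and $B = \cprez{\gamegraph}{A}$. Since $I \subseteq A \subseteq A \cup B$, only strictness needs an argument. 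Because $\attrz{\gamegraph}{I} = V \neq I$, the first layer of the controllable attractor already escapes $I$, so there is a vertex $v \in \cprez{\gamegraph}{I} \setminus I$; by monotonicity of $\cprez{\gamegraph}{\cdot}$ and $I \subseteq A$ we get $\cprez{\gamegraph}{I} \subseteq \cprez{\gamegraph}{A} = B$, whence $v \in (A \cup B) \setminus I$.

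With the claim in hand, each loop iteration enlarges $I$ by at least one vertex, so the loop runs at most $n = |V|$ times before reaching $I = V$ and terminating. Within one iteration the universal attractor $\uattr{\gamegraph}{I}$, the controllable predecessor $\cprez{\gamegraph}{A}$, and the edge set $\edges(B, A)$ are each computable in $\bigO(m)$ by a single linear pass over the edges, so one iteration costs $\bigO(m)$ and the whole loop costs $\bigO(nm)$. Adding the non-loop contributions keeps the total at $\bigO(nm)$, as claimed.

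The main obstacle is the strict-growth claim, specifically arguing cleanly that $\attrz{\gamegraph}{I} = V$ holds on the restricted graph and then transferring a single controllable step toward $I$ into the progress made in one iteration. The subtlety is that the universal attractor $A$ may well equal $I$ (when no vertex is \emph{forced} toward $I$), so the growth cannot be read off from $A$ alone; it must be extracted from the controllable step captured by $B$, which is exactly where the monotonicity $\cprez{\gamegraph}{I} \subseteq \cprez{\gamegraph}{A}$ is needed.
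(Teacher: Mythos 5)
Your proof is correct, and its decomposition matches the paper's: the steps outside the loop cost $\bigO(nm)$, the loop in $\reachTemp$ adds at least one vertex per iteration because of the restriction to the winning region in \cref{step:restrict to buchi winning region}, and each of the at most $n$ iterations costs $\bigO(m)$. You differ only in how strict progress is justified. The paper argues by contradiction at a hypothetical fixpoint $I_k = I_{k+1} \neq V$: unfolding the definitions of $\uattr{\gamegraph}{\cdot}$ and $\cprez{\gamegraph}{\cdot}$ shows that every $\pz$ vertex outside $I_k$ has no edge into $I_k$ and every $\po$ vertex outside $I_k$ keeps an edge outside, so $V\setminus I_k$ is a trap from which $\po$ avoids $I$ forever, contradicting that every vertex of the restricted graph is B\"uchi-winning. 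You argue the contrapositive directly: the restriction gives $\attrz{\gamegraph}{I}=V$, so whenever $I\neq V$ the first attractor layer $\cprez{\gamegraph}{I}$ must contain a vertex outside $I$ (otherwise the attractor collapses to $I$), and monotonicity of $\cprez{\gamegraph}{\cdot}$ places that witness in $B=\cprez{\gamegraph}{A}$, hence in the updated set. The two arguments carry the same mathematical content and are essentially dual; yours is more modular, reusing the textbook facts that the B\"uchi winning region lies inside the attractor of the goal set and that the predecessor operator is monotone, and it correctly isolates the subtlety that progress must come from $B$ rather than $A$ (since $A=\uattr{\gamegraph}{I}$ may well equal $I$), whereas the paper's is self-contained, manipulating only the sets the algorithm computes and never invoking any property of $\attrz{\gamegraph}{\cdot}$.
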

\begin{claimproof}
	Let $ k\in \N $ be such that $ I=I_0\subseteq I_1\subseteq \cdots I_k=I_{k+1} $, where $ I_j $ is the value of $ I $ in the $ j $-th iteration of the while loop in the \reachTemp ~procedure. It suffices to show that $ I_k=V $, for the graphs where $ V=\solveBuchi(\gamegraph,I) $, since we have already restricted our initial graph to such a graph in \cref{step:restrict to buchi winning region}.
	
	Suppose there exists a vertex $ v\in V\backslash I_k $. Then $ v\not\in I $. If $ v\in V_0 $, then there is no edge from $ v $ into $ I_k $, else $ v $ would be in $ B $ in the $ k+1 $-th iteration, and $ I_k\not= I_{k+1} $. If $ v\in V_1 $, then there exists an edge from $ v $ to $  V\backslash I_k  $, else $ v $ would be in $ A $ in the $ k+1 $-th iteration again. 
	
	Then there is no strategy for $ \pz $ to visit $ I_k $, and $ I $ in particular, from $ v $, implying $ v\not\in \solveBuchi(\gamegraph, I) $, which would be a contradiction to the fact that every vertex is \buchi winning for $ \pz $.
	
	Then $ I_k=I_{k+1}=V $. Hence the while loop will be exited, and the procedure, and hence the algorithm, terminates.
	
	\paragraph*{Complexity analysis:} The procedure \solveBuchi takes time $ \bigO(nm) $. Then the while loop in the \reachTemp ~procedure has at most $ n $ many iterations since at least one vertex is added to $ I $ in each iteration. Each iteration take $ \bigO(m) $ time, resulting in the total complexity of $ \bigO(nm) $ for the procedure \reachTemp, and hence, for the algorithm.
\end{claimproof}

With this, we are ready to prove soundness of the constructed template.	Let $ \stratz $ be a strategy following $ \assump $, and let $ \play=v_0\ldots v_i\ldots $ be a play compliant with $ \stratz $ originating at $ v_0\in \wino $.
	
	We first note that the play never leaves the winning region due to the safety part of the template. Now let $ k\in \N $ be such that in the proof of the lemma above, and $ A_i $ and $ B_i $ be the values of $ A $ and $ B $ in the $ i $-th iteration of the while loop in the algorithm above, i.e. $ I_i=A_i\cup B_i $ for $ 1\leq i\leq k $. 
	
	We show that $ \play $ visits $ I=I_0 $ infinitely often. To this end, let $ \gamma=c_0\ldots c_i\ldots \in [0;k]^{\omega} $, such that $ c_i=min\{j\in[0,k]\mid v_i\in I_j\} $, i.e. $ c_i $ is the iteration of while loop in the \reachTemp ~procedure when $ v_i $ was added in $ I $. We show that $ 0 $ occurs infinitely often in $ \gamma $. Suppose not, i.e. the minimum number $ m $ occurring infinitely often in $ \gamma $ is not $ 0 $. Then vertices from $ I_m\backslash I_{m-1} $ occur infinitely often in $ \play $. Then if vertices from $ A_m\backslash I_{m-1} $ are visited infinitely often, then since both players are forced to visit $ I_{m-1} $ every time $  A_m\backslash I_{m-1} $ is visited, by the definition of $ \textsf{uattr} $, we get a contradiction. If vertices from $ B_m $ are visited infinitely often, then since $ \stratz $ follows $ \assumpgrlive(\livegroup) $, infinitely often edges leading towards $ A_m $ are taken, again giving rise to a contradiction. Hence, $ m=0 $, proving that $ \play $ is winning for $ \pz $, and $ \assump $ is a winning strategy template for vertices in $ \wino $.
\end{proof}

\subsection{Strategy Templates for \cobuchi Games}\label{app:coBuechi}
Here, we restate \cref{thm:coBuechiCont}, and formally prove the same. 
\restatecobuechi*

\begin{proof}
Before proceeding with the proof of \cref{thm:coBuechiCont} we show that \cref{alg:coBuchitemplate} indeed terminates.
\begin{lemma}\label{lemma:cobuchi algo terminates}
	The \cref{alg:coBuchitemplate} terminates in time $ \bigO(nm) $, where $ n $ and $ m $ are as usual.
\end{lemma}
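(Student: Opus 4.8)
The plan is to bound the two nested loops separately and to show that the graph on which the algorithm operates shrinks strictly across outer iterations.

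First I would analyze the inner \textbf{while} loop (\cref{step:inner while loop begins}--\cref{step:inner while loop exits}). Each pass sets $B \gets \cprez{\gamegraph}{A}$ and then replaces $A$ by $A\cup B$; the guard $\cprez{\gamegraph}{A}\neq A$ forces $B\supsetneq A$, so $A$ strictly grows and the loop halts after at most $|V|$ passes. Its net effect is exactly the monotone fixpoint $\attrz{\gamegraph}{A}$ of the value of $A$ on entry (as the comment in the algorithm states). Computed with the usual frontier-based attractor procedure, the entire inner loop costs $\bigO(m)$; the preceding $\solveSafety$ call in \cref{step:colive to force staying in safety winning vertices} also costs $\bigO(m)$, since safety games are linear-time solvable. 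Hence each outer iteration runs in $\bigO(m)$ time.

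Second, I would show that the outer loop makes progress, i.e.\ that the set $A$ removed in each iteration is nonempty whenever $V\neq\emptyset$. I would maintain the invariant that, at the start of every outer iteration, every vertex of the current graph $\gamegraph$ is winning for $\pz$ in the co-B\"uchi game $(\gamegraph,\lozenge\square I)$ for the current $I$. The base case holds because \cref{step:reduce to cobuchi region} restricts the graph to $\wino=\solveCobuchi(\gamegraph,I)$. The key structural fact is that for co-B\"uchi objectives the winning region equals $\attrz{\gamegraph}{A_0}$, where $A_0=\solveSafety(\gamegraph,I)$ is the maximal region in which $\pz$ can maintain $\square I$: indeed $\pz$ wins iff it can force reaching a vertex from which it can stay in $I$ forever. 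Therefore, whenever $V\neq\emptyset$ and all of $V$ is winning, the safe region $A=\solveSafety(\gamegraph,I)$ is nonempty (its attractor is all of $V$), the inner loop enlarges it to the nonempty set $\attrz{\gamegraph}{A}$, and this set is deleted from the graph. Since this attractor is $\pz$-closed, the remaining subgame again consists only of co-B\"uchi winning vertices, so the invariant is preserved and the outer loop runs at most $|V|$ times.

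Combining the bounds, the initial $\solveCobuchi$ call costs $\bigO(nm)$ and the at most $n$ outer iterations contribute $\bigO(nm)$ in total, giving the claimed $\bigO(nm)$ bound. I expect the crux to be the second step: justifying that $A$ is nonempty in every iteration. This rests on the characterization $\wino=\attrz{\gamegraph}{\solveSafety(\gamegraph,I)}$ and on checking that deleting the $\pz$-closed attractor of the safe region leaves a subgame in which every vertex is still co-B\"uchi winning, so that the invariant genuinely survives; the bookkeeping of co-live edges into $\colivegroup$ plays no role in termination and can be ignored here.
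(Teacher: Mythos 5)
Your overall architecture --- the invariant that every vertex of the current graph is co-B\"uchi winning, progress of the outer loop via nonemptiness of the removed set, $\bigO(m)$ per outer iteration plus the initial $\bigO(nm)$ call to $\solveCobuchi$ --- is exactly the paper's, and your analysis of the inner loop and your (to-be-checked) trap argument for preserving the invariant are fine. However, the \enquote{key structural fact} on which you explicitly rest the crux is \emph{false}: for co-B\"uchi games the winning region is in general a \emph{strict} superset of $\attrz{\gamegraph}{A_0}$ with $A_0 = \solveSafety(\gamegraph,I)$. Concretely, take $\pz$-vertices $s,d,v$ and a $\po$-vertex $w$, with $I=\{s,v,w\}$ and edges $s\to s$, $d\to s$, $v\to w$, $w\to v$, $w\to d$. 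Every vertex is winning: any play either stays in $\{v,w\}\subseteq I$ forever, or $\po$ moves to $d$, after which the play reaches $s$ and remains there. Yet the safety region is $\{s\}$ and its $\pz$-attractor is $\{s,d\}$; the vertex $v$ is winning although $\pz$ cannot force reaching the safety region, because $\po$ can always return from $w$ to $v$. Note also that if your characterization were true, the outer loop would always terminate after a \emph{single} iteration --- the entire reason \cref{alg:coBuchitemplate} iterates (and the reason its correctness needs an induction) is precisely that co-B\"uchi winning is a least fixpoint over such attractors, not a one-shot attractor of the safety region. So both of your parenthetical claims, \enquote{$\pz$ wins iff it can force reaching a vertex from which it can stay in $I$ forever} and \enquote{its attractor is all of $V$}, are wrong.

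What your progress step actually needs is much weaker and is true: if $V\neq\emptyset$ and every vertex is winning, then $\solveSafety(\gamegraph,I)\neq\emptyset$. This admits a direct proof that bypasses your characterization: if the safety region were empty, then by determinacy of safety games $\po$ could force a visit to $V\setminus I$ from \emph{every} vertex; concatenating these forcing strategies, $\po$ forces infinitely many visits to $V\setminus I$ from every vertex, contradicting the invariant that some (indeed every) vertex is co-B\"uchi winning. With that replacement, and with the attractor-complement trap argument spelled out (removing $\attrz{\gamegraph}{A}$ leaves a subgame that $\pz$ cannot exit, so restricted winning strategies remain winning), your proof goes through and yields the claimed $\bigO(nm)$ bound. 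For comparison, the paper maintains the same invariant but never invokes --- and could not invoke --- the one-shot attractor characterization: it proves invariant preservation by induction (a vertex outside the removed attractor stays winning with the same strategy) and derives termination by contradiction from the fact that $I$ can only shrink finitely often.
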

\begin{claimproof}
	We first note that the inner while loop (\cref{step:inner while loop begins}-\ref{step:inner while loop exits}) terminates. This is simple to observe since $ A $ only grows and since there are finitely many vertices the termination condition will be satisfied eventually.
	
	We need to show that every vertex $ v\in V $ gets added to $ A $ in some iteration of the outer while loop. 
	
	We prove by induction that after every iteration of the outer loop, every vertex in the remaining graph is still \cobuchi winning in the remaining graph. The base case is trivial, because we recall that every vertex of $ \gamegraph $ is \cobuchi winning, due to \cref{step:reduce to cobuchi region}.
	
	We denote the graph after the $ i $-th iteration by $ \gamegraph_i =(V_i,E_i)$. Let the statement above holds true after $ i $-th iteration, i.e. $ V_i=\solveCobuchi(\gamegraph_i, I_i) $. Let $ u\in V_{i+1} $. For the $ i+1 $-th iteration $ A_{i+1}=\solveSafety(\gamegraph_{i}, I_i) $. Now if $ \pz $ had a strategy of reaching $ A_{i+1} $ in $ \gamegraph_i $, then it would be included in $ A_{i+1} $ after the inner while loop is executed. But since this is not the case, and $ u $ is still winning in $ G_i $, then the winning strategy is such that the plays do not necessarily stay in $ A_{i+1} $. Hence, even if $ A_{i+1} $ is removed from the graph, $ u $ is still winning in $ \gamegraph_{i+1} $ with the same winning strategy.
	
	Now for a vertex $ u $ to be \cobuchi winning, there exists a strategy such that every play starting at $ u $ eventually ends up in a subset $ J $ of $ I $. But $ I $ gets strictly smaller in every iteration of the outer while loop, and it can happen only finitely often. Hence if $ V $ never reduces to $ \emptyset $, there is a winning vertex $ v\in V $ but there is no $ J\subseteq I $, where the play starting at $ v $ can eventually end up in, producing a contradiction.
	
	\paragraph*{Complexity analysis:} The \solveCobuchi ~procedure takes $\bigO(nm) $ time. Then the outer loop needs at most $ n $ iterations, and the inner loop needs at most $ \bigO(m) $ time, since it is just the $ \mathsf{attr} $ computation, resulting in total complexity of $ \bigO(nm) $ for the algorithm.
\end{claimproof}

	Now let $ \stratz $ be a strategy following $ \assump $, and let $ \play=v_0\ldots v_i\ldots $ be a play compliant with $ \stratz $ originating at $ v_0\in \wino $.
	
	Let $ V_i=A_i\cup V_{i-1} $ and $ V_1=A_1 $. Intuitively, $ V_i $ is the set of vertices which have been removed from the initial graph after $ i $-th iteration of the outer while loop.	We denote by $ \gamegraph'_i $ the restricted graph $ \gamegraph|_{V_{i}} $. 
	
	We first show that if a play eventually stays in $ \gamegraph'_i $ then it is winning for $ \pz $. For the base case, when $ i=1 $, this is easy to see: because the play can go further away from $ A_{i} $, only finitely often due to the co-live edges added in \cref{step:colive to force going toward safety winning vertices}, and eventually the play stays in $ A_1\subseteq I $. Hence the play would be \cobuchi winning.
	
	Now let the statement holds true for $ \gamegraph'_{i-1} $ for some $ i-1\in \N $. Now if the play stays in $ \gamegraph'_i $. Then if the play stays in $ A_{i} $ then it is winning by the arguments similar to the base case. Else it will eventually end up in $ V_{i-1} $, since it can not go to $ A_i $ infinitely often from $ V_{i-1} $ due to the co-live edges added in \cref{step:colive to force going toward safety winning vertices} in the last iteration of inner while loop and \cref{step:colive to force staying in safety winning vertices}. Then by the induction hypothesis, it is again winning.
	
	Hence, by induction the statement holds true for every $ i $, and in particular for $ k $, where $ k $ is the total number of iterations of the outer while loop. Since due to the safety part of the template, the play $ \play $ stays in $ \gamegraph_k $, and hence is \cobuchi winning. Hence, $\assump $ is a $ \pz $ winning template for vertices in $ \wino $.
\end{proof}

\subsection{Strategy Templates for Parity Games}\label{app:parity}
We formally show that the strategy template constructed using \cref{alg:parityTemplateZeilonka} is winning for $ \pz $. We restate \cref{thm:ParityCont} for convenience. 
\restateparity*
\begin{proof}
	Before we prove that \cref{alg:parityTemplateZeilonka} gives a winning strategy template, we show that it terminates.
	\begin{lemma}
		The \cref{alg:parityTemplateZeilonka} terminates in time $ \bigO(n^{d+\bigO(1)}) $.
	\end{lemma}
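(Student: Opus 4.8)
The plan is to establish the lemma in two stages: first termination by well-founded induction on the number of vertices, then the running-time bound by observing that \cref{alg:parityTemplateZeilonka} replicates the recursion structure of Zielonka's algorithm and only adds polynomial bookkeeping per recursive node. For the termination argument I would induct on $n = |V|$ and show that every recursive call operates on a strictly smaller vertex set. Here I interpret $d$ as the largest priority actually occurring in the current subgame, so that $P_d \neq \emptyset$ (if the nominal maximum is not realized we simply decrement $d$ without changing the game); this is needed because otherwise the attractor $A$ would be empty and the algorithm would recurse on the same graph. In every recursive call that is not a base case, the graph is restricted to $V \setminus A$ or $V \setminus B$. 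The set $A$ is an attractor of $P_d$ (namely $\attro{\gamegraph}{P_d}$ or $\attrz{\gamegraph}{P_d}$) and hence contains the nonempty set $P_d$, so $A \neq \emptyset$; the set $B$ is the attractor of $\wino$ in the odd branch (reached only when $\wino \neq \emptyset$) or of $\winl$ in the even branch (reached only when $\winl \neq \emptyset$), so $B \neq \emptyset$. Thus both recursive calls strictly shrink $V$, while the base cases ($A = V$, or an empty winning region) return without recursing, which closes the induction and yields termination.

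For the complexity I would note that the control flow of \cref{alg:parityTemplateZeilonka} — which recursive calls are made and on which subgames — coincides with that of Zielonka's algorithm: the first call removes the attractor of the top priority and so runs with maximum priority at most $d-1$, whereas the second call removes a nonempty opponent attractor and so runs on at most $n-1$ vertices. Writing $T(n,d)$ for the worst-case running time and $W(n,m)$ for the non-recursive work at a single node, this yields the standard Zielonka recurrence
\[
T(n,d) \;\leq\; T(n,d-1) + T(n-1,d) + W(n,m).
\]

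It then remains to bound $W(n,m)$ and solve the recurrence. The non-recursive work at each node consists of a constant number of attractor computations (each $\bigO(m)$), graph restrictions and edge-set unions (each $\bigO(m)$), and at most two calls to $\reachTemp$, which runs in $\bigO(nm)$ by the complexity analysis in the proof of \cref{lemma:buchi algo terminates}; hence $W(n,m) = \bigO(nm)$. Unfolding the recurrence, the number of nodes in the recursion tree is bounded by $\binom{n+d}{d} = \bigO(n^{d})$ for fixed $d$, exactly the node count of Zielonka's recursion, so multiplying by the per-node cost gives $T(n,d) = \bigO(n^{d} \cdot nm) = \bigO(n^{d+1}m)$, which is $\bigO(n^{d+\bigO(1)})$ since $m = \bigO(n^2)$. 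The main obstacle is precisely this last step: I must argue that augmenting Zielonka's algorithm with the template bookkeeping (accumulating live-groups $\livegroup$ and co-live edges $\colivegroup$ via $\reachTemp$ and edge unions) never changes which recursive calls are made, so that the $\bigO(n^{d})$ bound on the number of recursive nodes carries over verbatim from the classical analysis and only a polynomial per-node factor is incurred.
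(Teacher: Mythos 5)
Your proposal is correct and takes essentially the same route as the paper: the paper's own proof is a two-line appeal to ``the usual analysis for Zielonka's algorithm'' together with termination of the \reachTemp{} calls, and your recurrence $T(n,d) \leq T(n,d-1) + T(n-1,d) + \bigO(nm)$ with $\bigO(n^{d})$ recursion nodes and $\bigO(nm)$ per-node work is precisely that analysis written out. If anything, you are more careful than the paper, e.g.\ in handling the degenerate case $P_d = \emptyset$ by reinterpreting $d$ as the largest priority actually occurring in the current subgame --- a point the paper's proof does not mention.
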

	\begin{claimproof}
		This is fairly easy to see since this is a simple modification of the usual Zeilonka's algorithm for parity games, and the call to the \reachTemp ~procedure terminates as shown in previous section. The complexity can be obtained by the usual analysis for Zeilonka's algorithm.
	\end{claimproof}
	
	We now prove that \cref{thm:ParityCont} gives a winning strategy template.
	
	Let $ \stratz $ be a strategy following $ \assump $, and let $ \play=v_0\ldots v_i\ldots $ be a play compliant with $ \stratz $ originating at $ v_0\in \wino $. 

	We prove by induction on the number of vertices $ n $ in $ G $ that $\stratz $ is winning. When $ n=1 $, this is trivially true. Now suppose that the statement holds for graphs of size $ k $. Now let $ n=k+1 $, and $ d $ be the highest priority occurring in $ \gamegraph $. First, we notice that $ \stratz $ does not allow $ \play $ to visit $ \winl $ by the correctness of safety templates.
	
	Now, if $ d $ is odd. Note that if $ \play $ visits $ \wino $ infinitely often, it will eventually stay in $ \wino $ due to co-live edges added in \cref{alg:step:noleavewo}, then by induction hypothesis, $ \play $ satisfies the parity winning condition. Else $ \play $ eventually stays in $ \wino' $, since if it goes to $ B\backslash\wino' $ infinitely often, then it will again visit $ \wino $ infinitely often due to live-groups added in \cref{alg:step:reachwo} and we can argue as above. Again $ \play  $ will be winning by induction hypothesis, if it stays in $ \wino' $.
	
	Otherwise, if $ d $ is even. If the play visits $ A $ infinitely often, then $ P_d $ is visited infinitely often due to the live-groups added in the \cref{alg:step:reachhigheven}. Otherwise, by induction hypothesis, if $ \play $ stays in $ \wino' $, it is winning again.
	
	Hence, by induction, $ \stratz $ is a winning strategy for $ \pz $, implying that $ \assump $ is a winning strategy template. 
\end{proof}

\subsection{Extracting Strategies from Strategy Templates}\label{app:prop:implementability}
We show that \cref{prop:implementability} holds, i.e., that Algorithms~\ref{alg:buchitemplate}, \ref{alg:coBuchitemplate} and \ref{alg:parityTemplateZeilonka} 
 always return conflict-free templates. 
\restateImplementability*
 
 \begin{proof}
The claim directly follows from the definition of the algorithms in the following way. 
%
First note that in every of the three algorithms, we only have unsafe edges going \emph{out of} the winning region and all other restrictions are on the edges \emph{inside} the winning region. Hence, there cannot be any conflict involving unsafe edges. Moreover, since the template returned by $\buchiTemp$ does not contain any co-live edge, it is easy to see that for such templates (i) and (ii) in \cref{def:implementability} can never occur. Furthermore, the algorithm for $\coBuchiTemp$ only adds a co-live edge when there is some other choice from the source vertex, showing that (i) in \cref{def:implementability} cannot occur. Moreover, there are no live-groups in the templates returned by $\coBuchiTemp$, hence (ii) in \cref{def:implementability} cannot occur.
%
Similarly, in the algorithm for $\parityTemp$, i.e., \cref{alg:parityTemplateZeilonka}, we only add co-live edges in \cref{alg:step:noleavewo}, which are going out of $\wino$, where $\wino$ is the winning region in a restricted game graph. Hence, there is always another choice from source vertices of such edges. Moreover, the live-groups it computes in \cref{alg:step:reachwo} contain edges which are inside $\wino$; and the live-groups computed in \cref{alg:step:reachhigheven} can never contain a co-live edge (as in that part of the algorithm we do not add any co-live edge). Therefore, the template returned by \cref{alg:parityTemplateZeilonka} is also conflict-free as cases (i) and (ii) of \cref{def:implementability} cannot occur.
\end{proof}

\section{Applications of Strategy Templates}

\subsection{Proof of \cref{lem:compositon}}\label{app:reductionToSingleParity}

\reductionToSingleParity*

\begin{proof}
	First of all, note that $P'_i =  P_i\setminus U$ for every $i\leq 2d$ and $P'_{2d+1} = P_{2d+1}\cup U$ by construction.
	Now, let us start by showing that $\lang(\spec') \subseteq \lang(\spec \wedge \lozenge\square\neg U)$.
	Suppose $\play \in \lang(\spec')$. Then for some priority $2j$, the play $\play$ visits $P'_{2j} \subseteq P_{2j}$ infinitely often and $\inf(\play) \subseteq \bigcup_{i\leq 2j} P'_{j}$, which implies $\inf(\play) \subseteq \bigcup_{i\leq 2j} P_{j}$. Hence, $\play \in \lang(\spec)$. Furthermore, as $\inf(\play) \cap P'_{2d+1} = \emptyset$ (since $ \play $ satisfies parity condition) and $U\subseteq P'_{2d+1}$, it holds that $\inf(\play) \cap U = \emptyset$. Hence, $\play \in \lang(\lozenge\square\neg U)$. Therefore, $\play \in  \lang(\spec \wedge \lozenge\square\neg U)$. The other direction follows similarly. Hence, $\lang(\spec') = \lang(\spec \wedge \lozenge\square\neg U)$.
	
	Note that by the above result, it holds that $\lang(\spec')\subseteq\lang(\spec)$. Now, if a $\p{0}$'s strategy $\strat$  is winning from a vertex $u$ in game $\game'$. Then it holds that $\lang(u,\strat) \subseteq \lang(\spec')$, which implies $\lang(u,\strat) \subseteq \lang(\spec)$. Hence, $\strat$ is also a winning strategy from $u$ in $\game$.
	Now, suppose a strategy template $\assump$ is winning from $u$ in $\game'$. Then every strategy satisfying the template $\assump$ is winning from $u$ in $\game'$, and hence, is winning from $u$ in $\game$. Therefore, the template $\assump$ is also winning from $u$ in $\game$.
\end{proof}

\subsection{Correctness of \compose}\label{app:correctnessOfCompositionality}
We recall \cref{thm:compositionCorrectness}, and prove the correctness and {conflict-freeness} of the strategy templates obtained by \compose ~here.
\correctnessOfCompositionality*

\begin{proof}
	Let us denote $P_{1,2d+1}$ to be the set of vertices $v$ such that $\priority_1(v) = 2d_1+1$.
	We show every claim using induction on the pair $(\abs{\wino'},\abs{\wino'\setminus P_{1,2d+1}})$ (ordered lexicographically), where $\wino'$ is the vertex set taken as input in the algorithm. As in the theorem statement, initially we have $\wino' = V$. 
	
	For base case, if $\abs{\wino'} = 0$, then $\abs{V} = 0$ and hence, it returns $\wino=\emptyset$; and if $\abs{\wino'}-\abs{P_{1,2d+1}} = 0$, then it is easy to see that $\wino = \emptyset$. Hence, if $\conflict_1 \cup \conflict_2 = \emptyset$, then it returns $\wino=\emptyset$. Otherwise, $\gamegraph|_{\wino}$ is an empty game graph; hence, in the next iteration, we have $\wino' = \emptyset$. So, each $W_i$ and each strategy template is empty. Hence, $\conflict_1 \cup \conflict_2 = \emptyset$ holds (in the next iteration), and it returns $\wino=\emptyset$. So, in any case, it returns an empty set as $\wino$, and an {conflict-free} strategy template that is trivially winning from $\wino$.
	
	Now for the induction case, suppose $\abs{\wino'}$ and $\abs{\wino'\setminus P_{1,2d+1}}$ are positive. 
	It is easy to verify that $\conflict_1$ and $\conflict_2$ corresponds to the set of conflicted vertices due to the condition (i) and (ii), respectively of \cref{def:implementability}.
	Hence, if $\conflict_1 \cup \conflict_2 = \emptyset$, then there is no conflict in the conjunction of the strategy templates. Hence, conjuncting winning strategy templates for all games $(\gamegraph, \spec_i)$ actually gives us an {conflict-free} and winning strategy template for the game $\bigwedge_{i\leq k}\spec_i$ as any strategy satisfying the strategy templates of every game is winning in every game. Therefore, the complete winning region for the game $(\gamegraph,\bigwedge_{i\leq k}\spec_i)$ is the intersection of the winning regions $W_i$. Hence, the algorithm returns the correct winning region and a winning strategy template for the game  $(\gamegraph,\bigwedge_{i\leq k}\spec_i)$.
	
	Now, if $\conflict_1 \cup \conflict_2 \not = \emptyset$, then some vertices are added to $P_{1,2d+1}$ (\cref{alg:compositon:line8}) for the next iteration.  Note that $\wino \subseteq \wino'$ and $\conflict_1\cup \conflict_2 \subseteq \wino'$ as the parity games are solved in \cref{alg:composition:parity} are solved for the game graph restricted to $\wino'$. So, in every iteration, $\wino'$ stays unchanged or gets smaller.
	Let $\wino''$ and $P'_{1,2d+1}$ are the $\wino'$ and $P_{1,2d+1}$ in the next iteration.
	If $\wino'$ gets smaller in the next iteration, then $\abs{\wino''} < \abs{\wino'}$. Else, we have $\abs{\wino''} = \abs{\wino'}$ and $\abs{\wino' \cap P'_{1,2d+1}} > \abs{\wino' \cap P_{1,2d+1}}$, which implies $\abs{\wino''\setminus P'_{1,2d+1}} < \abs{\wino'\setminus P_{1,2d+1}}$. 
	Hence, in any case, 
	\[(\abs{\wino''},\abs{\wino''\setminus P'_{1,2d+1}}) <_{lex} (\abs{\wino'},\abs{\wino'\setminus P_{1,2d+1}}).\]
	Then, by the induction hypothesis, the strategy template $\assump$ returned by the algorithm is {conflict-free} and winning from every vertex $u\in \wino$ in game $(\gamegraph,\bigwedge_{i\leq k}\spec'_i)$. 
	It is enough to show that the strategy template $\assump$ is also winning from every vertex $u\in \wino$ in game $(\gamegraph,\bigwedge_{i\leq k}\spec_i)$. 
	
	As $W_i$ is the winning region for the game $(\gamegraph,\spec_i)$ for each $i$, the winning region for the objective $\bigwedge_{i\leq k} \spec_i$ is a subset of $\bigcap_{i\leq k}W_i$. It is easy to see that a winning strategy is still winning if we restrict the game graph to the winning region. Furthermore, by \cref{lem:compositon}, it holds that the strategy template $\assump$ is indeed winning from every vertex $u\in \wino$ in game $(\gamegraph,\bigwedge_{i\leq k}\spec_i)$.
	
	For the complexity analysis, as the pair $(\abs{\wino'},\abs{\wino'\setminus P_{1,2d+1}})$ can decrease at most $n^2$ times, the maximum number of iterations is $n^2$. In each iteration, the algorithm call $\parityTemp$ for each game $(\gamegraph,\spec_i)$ which runs in $\bigO(n^{2d_i+1})$ time, and some additional operations which take polynomial time in number of edges. Hence, in total, the algorithm runs in time $\bigO(kn^{2d+3})$, where $d = \max_{i\leq k} d_i$.
\end{proof}

\subsection{Proof of \cref{prop:gafcorrection}}\label{sec:gafcorrection}





\GAFcorrection*

\begin{proof}
Suppose that for every vertex $ v\in V^0 $, there is an outgoing edge which is not in $ \safegroup\cup\colivegroup\cup \faulty $. Then consider the strategy $ \strat_g $ that, at time $ t $, takes the edge $ e\in E_t\backslash(\safegroup\cup\colivegroup) $ for all vertices without any live-group, and for the ones with live-groups, it alternates between the edges satisfying the live-groups whenever they are available, and the edge $ e\in E_t\backslash(\safegroup\cup\colivegroup) $ when no live-group edge is available. We show that $ \strat_g $ is winning for $ \pz $, by showing that it is compliant with $ \assump $, and invoking \cref{thm:ParityCont}. It is easy to observe that $ \strat_g $ is compliant with the safely and co-liveness part of $ \assump $. Now, to be compliant with the live-group template, we observe that the group-live edges will be available infinitely often when the play visits the source vertex, so $ \strat_g $ will indeed choose the edges alternately. Hence it will be compliant with the strategy template $ \assump $. 
\end{proof}


\section{Experimental Results}\label{app:experiments}

For completeness, we show a version of \cref{fig:intro-experiement-sensitivity} including all solvers considered in \cref{table:experiment} in \cref{fig:experiment-sensitivity-all}.

\begin{figure}[t]
    \includegraphics[scale=0.29]{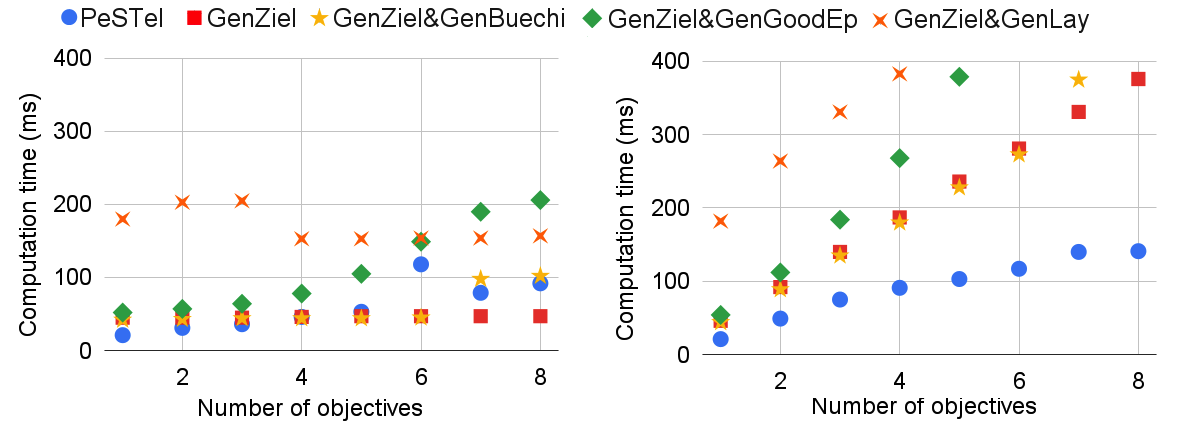}
    \vspace{-0.2cm}
     \caption{Experimental results over 1400 benchmark instances showing the sensitivity of different tools on the number of objectives. Data points give the average execution time (in ms) over all instances with the same number of objectives. Left: all objectives are given upfront. Right: objectives are added one by one.}
     \label{fig:experiment-sensitivity-all}
     \vspace{-0.5cm}
 \end{figure}

\end{document}